\def\GenerateShortVersion{}
\def\LongVersion{}
\def\LongVersionEnd{}
\long\def\ShortVersion#1\ShortVersionEnd{}
\def\ShortVersion{}
\def\ShortVersionEnd{}
\long\def\LongVersion#1\LongVersionEnd{}
\newcommand{\Ignore}[1]{\ignorespaces}
\renewcommand*{\ALG@name}{Pseudocode}
\renewcommand{\paragraph}[1]{\par\noindent\textbf{#1}}
\newtheorem{theorem}{Theorem}[section]
\newtheorem{lemma}[theorem]{Lemma}
\theoremstyle{definition}
\theoremstyle{plain}
\newenvironment{MathMaybe}[0]
{\begin{displaymath}\ignorespaces}
{\end{displaymath}}
\newtheorem{question}[theorem]{Question}
\newcommand{\Sampler}{\ensuremath{\mathtt{Sampler}}}
\newcommand{\Cluster}[1]{\ensuremath{\mathtt{Cluster}_{#1}}}
\renewcommand{\Pr}{\mathbb{P}}
\newcommand{\C}{\mathcal{C}}
\newcommand{\K}{\mathcal{K}}
\newcommand{\Ind}{\ensuremath{\mathrm{Ind}}}
\newcommand{\Dist}{\ensuremath{\mathrm{dist}}}
\begin{document}

\title{Message Reduction in the LOCAL Model is a Free Lunch}
\date{}

\author{Shimon Bitton}
\affil{Technion - Israel Institute of Technology.
\texttt{sbitton@technion.ac.il}}
\author{Yuval Emek}
\affil{Technion - Israel Institute of Technology.
\texttt{yemek@technion.ac.il}}
\author{Taisuke Izumi}
\affil{Nagoya Institute of Technology.
\texttt{t-izumi@nitech.ac.jp}}
\author{Shay Kutten}
\affil{Technion - Israel Institute of Technology.
\texttt{kutten@ie.technion.ac.il}}

\begin{titlepage}

\maketitle

\begin{abstract}
A new \emph{spanner} construction algorithm is presented, working under the
\emph{LOCAL} model with unique edge IDs.
Given an $n$-node communication graph, a spanner with a constant stretch and
$O (n^{1 + \varepsilon})$
edges (for an arbitrarily small constant
$\varepsilon > 0$)
is constructed in a constant number of rounds sending
$O (n^{1 + \varepsilon})$
messages whp.
Consequently, we conclude that every $t$-round LOCAL algorithm can be
transformed into an
$O (t)$-round
LOCAL algorithm that sends
$O (t \cdot n^{1 + \varepsilon})$
messages whp.
This improves upon all previous message-reduction schemes for LOCAL algorithms
that incur a
$\log^{\Omega (1)} n$
blow-up of the round complexity.

\noindent 
Keywords:
distributed graph algorithms,
local model,
spanner,
message complexity.
\end{abstract}

\noindent

\end{titlepage}


\pagenumbering{arabic}

\section{Introduction}
\label{sec:introduction}
What is the minimum number of messages that must be sent by \emph{distributed
graph algorithm} for solving a certain task?
Is there a tradeoff between the message and time complexities of such
algorithms?
How do the message complexity bounds depend on the exact model assumptions?
These questions are among the most fundamental ones in distributed computing
with a vast body of literature dedicated to their resolution.

A graph theoretic concept that plays a key role in this regard is that of
\emph{spanners}.
Introduced by Peleg and Ullman \cite{PelegU1989} (see also
\cite{peleg1989graph}), an \emph{$\alpha$-spanner}, or a spanner with
\emph{stretch} bound $\alpha$, of a connected graph
$G = (V, E)$
is a (spanning) subgraph
$H = (V, S)$
of $G$ where the distance between any two vertices is at most $\alpha$ times
their distance in $G$.\footnote{%
An equivalent definition requires that it admits a path of length at most
$\alpha$ between any two nodes adjacent in $G$.}
More general spanners, called \emph{$(\alpha, \beta)$-spanners}, are also
considered, where the spanner distance between any two nodes is at most
$\alpha$ times their distance in $G$ plus an additive $\beta$-term
(\cite{ElkinP2004}).

Sparse low stretch spanners are known to provide the means to save on message
complexity in the \emph{LOCAL} model
\cite{Linial1992, Peleg2000}
without a significant increase in the round complexity.
This can be done via the following classic simulation technique:
Given an $n$-node communication graph
$G = (V, E)$
and a LOCAL algorithm $\mathcal{A}$ whose run $\mathcal{A}(G)$ on $G$ takes
$t$ rounds,
(1)
construct an $\alpha$-spanner
$H = (V, S)$
of $G$;
and
(2)
simulate each communication round of $\mathcal{A}(G)$ by $\alpha$
communication rounds in $H$ so that a message sent over the edge
$(u, v) \in E$
under $\mathcal{A}(G)$ is now sent over a
$(u, v)$-path
of length at most $\alpha$ in $H$.
The crux of this approach is that the simulating algorithm executed in stage
(2) runs for
$\alpha t$
rounds and sends at most
$2 \alpha t \cdot |S|$
messages.
Therefore, if $\alpha$ and $|S|$ are 'small', then the simulating algorithm
incurs 'good' round and message bounds.
In particular, the performance of the simulating algorithm does not depend on
the number $|E|$ of edges in the underlying graph $G$.

What about the performance of the spanner construction in the 'preprocessing'
stage (1) though?
A common thread among distributed spanner construction algorithms is that they
all send
$\Omega (|E|)$
messages when running on graph
$G = (V, E)$.
Consequently, accounting for the messages sent during this preprocessing stage,
the overall message complexity of the aforementioned simulation technique
includes a seemingly inherent
$\Omega (|E|)$
term.
The following research question that lies at the heart of distributed message
reduction schemes is therefore left open.

\begin{question} \label{question:message-reduction}
Given a LOCAL algorithm $\mathcal{A}$ whose run $\mathcal{A}(G)$ on $G$ takes
$t$ rounds, is it possible to simulate $\mathcal{A}(G)$ in
$O (t)$
rounds while sending only
$O (n^{1 + \varepsilon})$
messages for an arbitrarily small constant
$\varepsilon > 0$,
irrespective of the number $|E|$ of edges in $G$?
\end{question}
This question would be resolved on the affirmative if one could design a LOCAL
algorithm that constructs an $\alpha$-spanner
$H = (V, S)$
of $G$ with stretch
$\alpha = O (1)$
and
$|S| = O (n^{1 + \varepsilon})$
edges in
$O (1)$
rounds sending
$O (n^{1 + \varepsilon})$
messages.
Despite the vast amount of literature on distributed spanner construction
algorithms
\cite{BaswanaS2007, elkin2007near, DerbelGPV2008, DerbelG2006, DerbelGPV2009,
Pettie2010, ElkinZ2006, ElkinN2017},
it is still unclear if such a LOCAL spanner construction algorithm exists.

Some progress towards the positive resolution of
Question~\ref{question:message-reduction} has been obtained by Censor-Hillel
et al.~\cite{Censor-HillelHKM2012} and Haeupler~\cite{Haeupler2015} who
introduced techniques for simulating LOCAL algorithms by \emph{gossip}
processes.
Using this approach, one can transform any $t$-round LOCAL
algorithm into a LOCAL algorithm that runs in 
$O (t \log n + \log^{2} n)$
rounds while sending $n$ messages per round \cite{Haeupler2015}.
This transformation provides a dramatic message complexity improvement if one
is willing to accept algorithms that run for
$\log^{O (1)} n$
many rounds, e.g., if the the bound $t$ on the
round complexity of the original algorithm is already
in the
$\log^{O (1)} n$
range.
However, if
$t = \log^{o (1)} n$,
then the gossip based message reduction scheme of
\cite{Censor-HillelHKM2012, Haeupler2015}
significantly increases the
round complexity and this increase seems to be inherent to that technique.

\subsection{Definitions and Results}
\label{sec:results}
Throughout, we consider a communication network represented by a connected
unweighted undirected graph
$G = (V, E)$
and denote
$n = |V|$.
The nodes of $G$ participate in a distributed algorithm under the (fully
synchronous) \emph{LOCAL} model
\cite{Linial1992, Peleg2000}
with the following two model
assumptions:
(i)
the nodes know an $O(1)$-approximate upper bound on $\log n$ (equivalently,
a $\mathrm{poly(n)}$-approximate upper bound of $n$) at all times;
and
(ii)
the graph admits \emph{unique edge IDs} so that the ID of an edge is known
to both its endpoints at all times.\footnote{%
Alternatively, the algorithm can run under the rather common $KT_{1}$ model
variant \cite{AwerbuchGVP1990}, where the nodes are associated with unique IDs
and each node knows the ID of the other endpoint of each one of its incident
edges;
see the discussion in Section~\ref{sec:related-work}.}
Other than that, the nodes have no a priori knowledge of $G$'s topology.
Our main technical contribution is a new algorithm for constructing sparse
spanners, called \Sampler{}, whose guarantees are cast in the following
Theorem.

\begin{theorem} \label{thm:main}
Fix integer parameters
$1 \leq k \leq \log\log n$
and
$0 \leq h \leq \log n$.
Algorithm \Sampler{} constructs an edge set
$S \subseteq E$
of size
$|S| \leq \tilde{O} (n^{1 + 1 / (2^{k+1} - 1)})$
such that
$H = (V, S)$
is an
$O(3^k)$-spanner of $G$
whp.\footnote{%
We say that an event occurs \emph{with high probability}, abbreviated by
\emph{whp}, if the probability that it does not occur is at most $n^{-c}$ for
an arbitrarily large constant $c$.}\,\footnote{%
The asymptotic notation
$\tilde{O} (\cdot)$
may hide
$\log^{O (1)} n$
factors.}\,%
The round complexity of \Sampler{} is
$O (3^k h)$
and its message complexity is
$\tilde{O} (n^{1 + 1 / (2^{k+1} - 1) + (1/h)})$
whp.
\end{theorem}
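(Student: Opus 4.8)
The plan is to establish all four guarantees simultaneously by induction on the parameter $k$, exploiting a recursive structure in which \Sampler{} run with parameter $k$ invokes itself with parameter $k-1$ on a suitably contracted instance. Writing $d_k = 2^{k+1}-1$ for the denominator appearing in the size exponent, the algebraic fact driving the recursion is the identity $d_k = 2 d_{k-1} + 1$, with base $d_1 = 3$; I expect the construction to be engineered precisely so that one ``level'' turns a guarantee with denominator $d_{k-1}$ into one with denominator $d_k$. Concretely, I would have the clustering step sample each node (or each surviving cluster of the previous level) independently with probability $p = n^{-1/d_k}$, up to polylogarithmic factors, and grow clusters of geometrically increasing radius, so that the number of surviving centers drops to roughly $n^{1-1/d_k}$. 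The routine \Connect{} then attaches every node to a neighboring surviving cluster and, following a Baswana--Sen-style rule, adds one spanner edge to each adjacent cluster inspected before the first surviving one is encountered; the recursion is applied to the contracted cluster graph on the $\tilde{O}(n^{1-1/d_k})$ surviving centers, with \Reconstruct{} translating the recursively built spanner back into edges of $G$.

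With this anatomy fixed, the stretch and size bounds should follow along fairly routine lines. For the stretch, two adjacent $u,v \in E$ lying in a common level-$i$ cluster are joined by a path of length $O(3^i)$, whereas if they fall in distinct clusters I would route through the inductively guaranteed $O(3^{k-1})$-spanner of the cluster graph, charging each contracted edge to a path of length $O(3^{k-1})$ in $G$; this gives the recurrence $s_k \le 3 s_{k-1} + O(1)$ and hence $s_k = O(3^k)$. For the size, the geometric sampling analysis yields an expected $O(1/p) = \tilde{O}(n^{1/d_k})$ edges added per node by \Connect{}, for a per-level total of $\tilde{O}(n^{1+1/d_k})$; adding the inductive contribution $\tilde{O}\bigl((n^{1-1/d_k})^{1+1/d_{k-1}}\bigr)$ of the contracted instance and invoking $d_k = 2 d_{k-1} + 1$ shows that both terms equal $\tilde{O}(n^{1+1/d_k})$, exactly as claimed.

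The crux, and the step I expect to be the main obstacle, is the message bound $\tilde{O}(n^{1+1/d_k+1/h})$, since a naive implementation of \Connect{} would probe all of a node's incident edges and incur the very $\Omega(|E|)$ cost the paper sets out to circumvent. The plan here is to have each node discover a neighboring surviving cluster by probing its incident edges, identified by their unique IDs, in $O(h)$ geometrically growing batches of sizes $n^{1/h}, n^{2/h}, \dots$, halting at the first batch that reveals a surviving cluster. Because such a cluster is found after $\tilde{O}(1/p) = \tilde{O}(n^{1/d_k})$ probes whp, and the batch granularity can overshoot the decisive batch by at most a factor of $n^{1/h}$, each node sends only $\tilde{O}(n^{1/d_k + 1/h})$ messages, giving the stated total.

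The delicate points I anticipate are twofold. First, the whp guarantee on the number of probes must be argued uniformly over all nodes, via a union bound over the sampling coins together with the $O(1)$-approximate knowledge of $\log n$ that calibrates the batch sizes and the stopping threshold. Second, for the round complexity I would account for the cost of communicating within and across clusters at recursion depth $i$: since a cluster at that depth has radius $O(3^i)$ in the $G$-metric, one step of the $h$-batch probing there costs $O(3^i)$ rounds, so that the $O(h)$ batches per level sum to $\sum_{i} O(3^i h) = O(3^k h)$ rounds overall, matching the claimed bound.
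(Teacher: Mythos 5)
Your high-level architecture is essentially the paper's own algorithm recast recursively: the identity $d_k = 2d_{k-1}+1$ is exactly how the paper's per-level sampling probabilities $p_j = n^{-2^j/(2^{k+1}-1)}$ arise, your cluster-radius recurrence matches the paper's bound $3(3^j-1)+2 = 3^{j+1}-1$, and your size accounting per level agrees with the paper's Lemma on $n_j$. The stretch and size claims would go through along the lines you sketch (modulo handling nodes with no surviving neighboring cluster, which you implicitly cover via the Baswana--Sen rule of connecting to all adjacent clusters).

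However, there is a genuine gap at precisely the step you flag as the crux: the claim that a node finds a surviving (sampled) neighboring cluster after $\tilde{O}(1/p)$ edge probes whp. This is true only if each probe hits a (near-)uniformly random \emph{neighbor}, but after contraction the cluster graph $G_j$ has edge multiplicities, and your probes --- whether deterministic batches of edge IDs or uniformly random edges --- hit a uniformly random \emph{edge}. If a single non-center neighbor $w$ accounts for, say, a $1-p$ fraction of $v$'s incident edges, then conditioned on $w$ not being sampled (probability $1-p$), the per-probe chance of reaching a center can be as small as $p^2$, so $\tilde{O}(1/p)$ probes do not suffice and your whp claim fails. This multiplicity obstruction is exactly what the paper identifies as the key difficulty, and its resolution is the heart of the construction: the $2h$-trial \emph{peeling} process in \Cluster{j}, in which each discovered neighbor $u$ reports the IDs of all edges in $E_j(v,u)$ so that $v$ can remove them from the sampling pool $X_v$, together with the analysis of Lemma~\ref{lem:unclustered} that partitions the undiscovered neighbors into $2h$ volume classes of width $n^{\epsilon}$ and shows that each trial either exhausts the maximum-volume class or discovers $c\,n^{2^j\delta}\log n$ new neighbors (whence every node ends up light or heavy). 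Your geometric doubling of batch sizes is a different mechanism that does not substitute for this peeling; without it, neither the message bound nor the guarantee that every unclustered node is light (on which the stretch argument also relies) is established.
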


By setting the parameters $k$ and $h$ so that
$1 / (2^{k + 1} - 1) = 1 / h = \varepsilon / 2$
for an arbitrarily small constant
$\varepsilon > 0$
and utilizing the aforementioned spanner based simulation technique, we obtain
a message-reduction scheme that transforms any LOCAL algorithm $\mathcal{A}$
whose run on $G$ takes $t$ rounds into a (randomized) LOCAL algorithm that
runs in
$O (t)$
rounds and sends
$\tilde{O} (t n^{1 + \varepsilon})$
messages whp.
This resolves Question~\ref{question:message-reduction} on the affirmative
provided that one is willing to tolerate a
$1 / \mathrm{poly}(n)$
error probability.
In fact, we can improve the message reduction scheme even further via the
following two-stage process:
first, use the $\alpha$-spanner
$H = (V, S)$
constructed by \Sampler{} to simulate the run on $G$ of some off-the-shelf
LOCAL algorithm that constructs an $\alpha'$-spanner
$H' = (V, S')$
with a better tradeoff between $\alpha'$ and $|S'|$;
then, use $H'$ to simulate the run of $\mathcal{A}$ on $G$.
In Section~\ref{sec:simulation}, we show that with the right choice of
parameters, this two-stage process leads to the following theorem.

\begin{theorem} \label{thm:simulation}
  Every distributed task solvable by a $t$-round LOCAL algorithm can be solved
  with
  any one of the following pairs of time and message complexities:
  \begin{itemize}
    \item $\tilde{O}(tn^{1 + 2/(2^{\gamma+1} - 1)})$ message complexity and
    $O(3^\gamma t + 6^\gamma)$ round complexity for any
    $1 \leq \gamma \leq \log\log n$,
    \item $\tilde{O}(t^2n^{1 + O(1/\log t)})$ message
    complexity and $O(t)$ round complexity.
    \end{itemize}
\end{theorem}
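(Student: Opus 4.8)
The plan is to derive both complexity pairs from a single \emph{spanner-simulation primitive} combined with Theorem~\ref{thm:main}. The primitive (described in the introduction) states: given an $\alpha$-spanner $H = (V, S)$ of $G$ whose structure and short replacement paths are known to the nodes, any $t$-round LOCAL algorithm $\mathcal{A}$ can be simulated in $O(\alpha t)$ rounds using $O(\alpha t \cdot |S|)$ messages, by routing each message that $\mathcal{A}$ sends over an edge $(u,v) \in E$ along a $(u,v)$-path of length at most $\alpha$ in $H$. I would first record this primitive as a lemma (including the fact that the replacement paths can be set up within the round/message budget of the spanner construction, using the unique edge IDs), so that the rest of the argument is pure bookkeeping over parameter choices. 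Since $1 \leq k \leq \log\log n$ forces $3^k = \mathrm{poly}(\log n)$, every stretch factor produced by \Sampler{} is absorbed into the $\tilde{O}(\cdot)$ notation on the message side, which is exactly what lets the message bounds avoid any explicit $3^k$ or $6^k$ term.

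For the first item I would apply \Sampler{} \emph{once}, with $k = \gamma$ and $h = \Theta(2^\gamma)$. This choice is engineered so that the extra exponent $1/h$ in \Sampler{}'s message complexity equals $\Theta(1/(2^{\gamma+1}-1))$, doubling the sparsity exponent and giving construction cost $\tilde{O}(n^{1 + 2/(2^{\gamma+1}-1)})$ messages in $O(3^\gamma h) = O(6^\gamma)$ rounds. Simulating $\mathcal{A}$ over the resulting $O(3^\gamma)$-spanner then adds $O(3^\gamma t)$ rounds and $\tilde{O}(3^\gamma t \cdot n^{1 + 1/(2^{\gamma+1}-1)})$ messages. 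Summing the two phases, absorbing the polylogarithmic $3^\gamma$ into $\tilde{O}(\cdot)$, and bounding each term by the larger exponent yields round complexity $O(3^\gamma t + 6^\gamma)$ and message complexity $\tilde{O}(t\, n^{1 + 2/(2^{\gamma+1}-1)})$. The only delicate point here is the boundary constraint $h \leq \log n$ near $\gamma = \log\log n$, which I would handle by capping $h$ at $\log n$ and checking that the claimed exponent bound still holds.

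For the second item I would use the \emph{two-stage} composition sketched before the theorem. First build a cheap communication backbone $H = \Sampler{}(k,h)$ with $k$ tuned so that $2^{k+1}-1 = \Theta(\log t)$ and $h = \Theta(\log t)$; this makes $H$ an $O(3^k) = \mathrm{poly}(\log t)$-spanner with $|S| = \tilde{O}(n^{1 + O(1/\log t)})$ edges, built with $\tilde{O}(n^{1+O(1/\log t)})$ messages. Next, \emph{simulate over $H$} an off-the-shelf LOCAL spanner algorithm (e.g.\ \cite{BaswanaS2007}), which enjoys the superior \emph{linear} stretch-to-sparsity tradeoff $2\ell-1$ versus $n^{1+1/\ell}$ that \Sampler{} lacks; running it in $\mathrm{poly}(\log t)$ rounds over the backbone produces a spanner $H' = (V, S')$ of $G$ with stretch $\alpha' = 2\ell - 1$ at the cost of $\tilde{O}(3^k \cdot |S|) = \tilde{O}(n^{1+O(1/\log t)})$ messages. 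Finally, simulate $\mathcal{A}$ over $H'$. Choosing $\ell = \Theta(\log n / \log t)$ so that $\alpha' = O(1)$ and $|S'| = O(n^{1+1/\ell}) = \tilde{O}(t\,n)$, the last simulation runs in $O(\alpha' t) = O(t)$ rounds and $O(\alpha' t\,|S'|) = \tilde{O}(t^2 n^{1+O(1/\log t)})$ messages, which dominate the earlier phases.

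The main obstacle, and the step I would spend the most care on, is the parameter balancing in the second item against the inherent stretch-versus-sparsity tension captured by the girth bound: a constant-stretch spanner of a worst-case dense $G$ must have $n^{1+\Omega(1)}$ edges, so the final simulation genuinely sends $\Omega(t\, n^{1+\Omega(1)})$ messages. The point is to choose the constant $\alpha'$ as a function of the ratio $\log t / \log n$ so that this unavoidable size overhead $n^{\Theta(\log t/\log n)}$ is exactly $\Theta(t)$, letting the second factor of $t$ in the $t^2$ absorb it while the stated base $n^{1+O(1/\log t)}$ stays $O(n)$; this is why the clean $O(t)$ round bound lives in the regime of polynomially large $t$, with the first item covering the small-$t$ range. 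I would close with a union bound over the two (whp-correct) spanner constructions, noting that the simulations themselves are deterministic once the spanners are fixed.
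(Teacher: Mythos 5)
Your first item follows the paper's route essentially verbatim: set $k=\gamma$, $h=\Theta(2^{\gamma})$ in Theorem~\ref{thm:main}, pay $O(6^{\gamma})$ rounds and $\tilde{O}(n^{1+2/(2^{\gamma+1}-1)})$ messages for construction, then simulate over the $O(3^{\gamma})$-spanner (the paper phrases the simulation as flooding to distance $\alpha t$, i.e.\ $t$-local broadcast, rather than per-edge replacement-path routing, which also spares you the path set-up issue you flag, but the bounds are the same). That part is fine.

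The second item has a genuine gap. You instantiate the off-the-shelf algorithm with a \emph{multiplicative} $(2\ell-1)$-spanner \cite{BaswanaS2007} and then try to choose $\ell$ so that the stretch is $O(1)$ while the size is $\tilde{O}(t\,n)$. These two requirements are simultaneously satisfiable only when $t=n^{\Omega(1)}$ or $t=O(1)$: by the girth obstruction you yourself invoke, constant stretch forces $n^{1/\ell}=n^{\Omega(1)}$ edges, and matching that against the target $t\cdot n^{O(1/\log t)}$ fails throughout the intermediate regime (e.g.\ $t=2^{\sqrt{\log n}}$ requires $\ell=\Omega(\sqrt{\log n})$ for the message bound, destroying the $O(t)$ round bound; also Baswana--Sen itself needs $\Theta(\ell)$ simulated rounds, costing $\mathrm{poly}(\log t)\cdot\log n/\log t$ real rounds, which already exceeds $O(t)$ for small $t$). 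Your fallback that ``the first item covers the small-$t$ range'' does not rescue the claim: to get $O(t)$ rounds from the first item you must take $\gamma=O(1)$, which yields $\tilde{O}(t\,n^{1+\Omega(1)})$ messages, not $\tilde{O}(t^{2}n^{1+O(1/\log t)})$. The paper's proof avoids all of this by using the \emph{near-additive} $(3,O(3^{k}))$-spanner of Derbel et al.~\cite{DerbelGPV2009} with $k=\lceil\log_3 t-\log\log_3 t\rceil$: the additive $O(t)$ term in the stretch is free for $t$-local broadcast (one floods to distance $3t+O(t)=O(t)$), the sparsity $\tilde{O}(t\,n^{1+1/O(\log t)})$ sidesteps the girth bound precisely because the stretch is not purely multiplicative, and the $O(3^{k})=O(t/\log_3 t)$ construction rounds simulated at cost $O(3^{\gamma})=O(\log_3 t)$ each over the \Sampler{} backbone total $O(t)$. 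The missing idea in your proposal is exactly this switch from multiplicative to near-additive spanners in the second stage; without it the second bullet cannot be established for all $t\ge 1$.
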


\subsection{Related Work and Discussion}
\label{sec:related-work}

\paragraph*{Model Assumptions}
The current paper considers the fully synchronous message passing LOCAL model
\cite{Linial1992, Peleg2000}
that ignores the message size and focuses only on locality considerations.
This model has been studied extensively (at least) since the seminal
paper of Linial \cite{Linial1992}, with special attention to the question of
what can be computed efficiently, including some recent interesting
developments, see, e.g., the survey in
\cite[Section 1]{ghaffari-complexity-local}.
The more restrictive \emph{CONGEST} model \cite{Peleg2000}, where message size
is bounded (typically to $O (\log n)$ bits), has also been extensively
studied.

Many variants of the LOCAL model have been addressed over the years,
distinguished from each other by the exact model assumptions, the most common
such assumptions being unique node IDs and knowledge of $n$.
Another important distinction addresses the exact knowledge held by any node
$v$ regarding its incident edges when the execution commences.
Two common choices in this regard are the $KT_{0}$ variant, where $v$
knows only its own degree, and the $KT_{1}$ variant, where $v$ knows the ID of
$e$'s other endpoint for each incident edge $e$ \cite{AwerbuchGVP1990}.
The authors of \cite{AwerbuchGVP1990} advocate
 $KT_1$, arguing that it is
the more natural among the two model variants, but papers have been published
about each of them.

In the current paper, it is assumed that each \emph{edge}
$(u, v) \in E$
is equipped with a unique ID, known to both $u$ and $v$.
In general, this assumption
lies (strictly) between the $KT_{0}$ and $KT_{1}$
model variants.
Note that the unique edge IDs assumption is no longer weaker than the $KT_{1}$
assumption when the communication graph admits \emph{parallel} edges. However,
our algorithm and analysis apply also to such graphs (assuming that
$|E| \leq n^{O (1)}$)
under either of the two assumptions.

\paragraph*{Message Complexity $o(|E|)$}
As discussed in Section~\ref{sec:results}, the main conceptual contribution of
this paper is that on graphs with
$m = |E| \gg n$
edges, many distributed tasks can now be solved by sending
$o (m)$
messages while keeping the round complexity unharmed.
The challenge of reducing the message complexity below
$O (m)$
has already received significant attention.
In particular, it has been proved in \cite{KuttenPPRT2015} that under the
CONGEST $KT_0$ model, intensively studied \emph{global} tasks, namely,
distributed tasks that require
$\Omega (D)$
rounds, where $D$ is the graph's diameter (e.g., broadcasting, leader
election, etc.), cannot be solved unless
$\Omega (m)$
messages are sent (in the worst case).

This is no longer true under more relaxed models.
For example, under the LOCAL $KT_{1}$ model, DFS and leader election can be
solved by sending
$O (n)$
and
$O (n \log n)$
messages, respectively \cite{korach1990modular}.
This implies similar savings in the number of messages required for most
global tasks (trivially, by collecting all the information to the leader).
Under the CONGEST $KT_{1}$ model, it has been recently proved that a
\emph{minimum spanning tree} can be constructed, sending
$o (m)$
messages \cite{king2015construction,GmyrP18,ghaffari2018distributed,mashreghi2018broadcast}.

Restricted graph classes have also been addressed in this regard.
In particular, the authors of \cite{kutten2013sublinear} proved that under the
CONGEST $KT_{0}$ model, the message complexity of leader election is
$O(\sqrt{n} \log^\frac{3}{2} n)$
whp in the complete graph and more generally,
$O(\tau \sqrt{n} \log^\frac{3}{2} n)$
whp in graphs with mixing time $\tau(G)$.

\paragraph*{Spanners}
Graph spanners have been extensively studied and
papers dealing with this fundamental graph theoretic object are too numerous
to cite.
Beyond the role that sparse spanners play in reducing the message complexity
of distributed (particularly LOCAL) algorithms as discussed in
Section~\ref{sec:results}, spanners have many applications in various
different fields, some of the more relevant ones include
synchronization
\cite{awerbuch1985complexity, PelegU1989},
routing
\cite{peleg1989trade, awerbuch1998near},
and distance oracles \cite{thorup2005approximate}.

Many existing distributed spanner algorithms
have a node collect the topology of the graph up to a distance of some $r$ from itself
\cite{DubhashiMPRS2005, DerbelG2006}
or employ more sophisticated bounded diameter graph decomposition
techniques involving the node's neighborhood
\cite{Elkin2005, ElkinZ2006, MillerPVX2015, ElkinN2017}
such as the techniques presented in
\cite{LinialS1993, BlellochGKMPT2014, ElkinN2016}.
This approach typically requires sending messages over every edge at distance
at most $r$ from some subset of the nodes which leads to a large number of
messages.
Another approach to constructing sparse spanners in a distributed manner is to
recursively grow local clusters
\cite{BaswanaS2007, DerbelGPV2008, Pettie2010,barenboim2018fast,censor2018distributed}.
Although this approach does not require the (explicit) exploration of
multi-hop neighborhoods, the existing algorithms operating this way also admit
large message complexity because too many nodes have to explore their $1$-hop
neighborhoods.
Our algorithm \Sampler{} is inspired by the algorithm of
\cite{BaswanaS2007} and adheres to the latter approach, but it is
designed in a way that drastically reduces the message complexity --- see
Section~\ref{sec:techniques-overview}.


\subsection{Techniques Overview}
\label{sec:techniques-overview}
Algorithm \Sampler{} employs \emph{hierarchical node sampling}, where a
sampled node $u$ in level
$j$
of the hierarchy forms the \emph{center} of a cluster that includes (some of)
its non-sampled neighbors $v$.
An edge connecting $u$ and $v$ is added to the spanner.
The clusters are then contracted into the nodes of the next level $j+1$.
Also added to the spanner are all
incident edges of every non-center node that has no adjacent center.

This hierarchical node sampling is used also in the distributed spanner
construction of Baswana and Sen~\cite{BaswanaS2007} and similar recursive
clustering techniques were used in other papers as well (see
Section~\ref{sec:related-work}).
Common to all these papers is that the centers in each level communicate
directly with \emph{all} their neighbors to facilitate the cluster forming
task.%
\footnote{Some of these papers consider weighted communication graphs and as
such, have to deal with other issues that are spared in the current paper.}
As this leads inherently to
$\Omega (|E|)$
messages, we are forced to follow a different approach, also based on (a different) sampling process.

The first thing to notice in this regard is that it is enough for a non-center
node $v$ to find a \emph{single} center $u$ in its neighborhood, so perhaps
there is no need for the center nodes to announce their status to \emph{all}
their neighbors?
Indeed, we invoke an \emph{edge sampling} process in the non-center nodes $v$
that identifies a subset of $v$'s incident edges over which \emph{query}
messages are sent.
Our analysis shows that
(1)
the number of query messages is small whp;
(2)
if $v$ does have an adjacent center, then the edge sampling process finds such a
center whp;
and
(3)
if $v$ does not have an adjacent center, then $v$'s degree is small; in this case, all its
incident edges are queried and join the spanner whp.

However, this edge sampling idea by itself does not suffice:
Note that the graph (in some level of the hierarchical construction) is constructed via cluster contraction (into a single node) in lower levels of the
hierarchy. Hence this graph typically exhibits edge multiplicities (even if the original
communication graph is simple).
This means that some neighbors $w$ of $v$ may have many more (parallel)
$(w, v)$-edges
than others. Informally, this can bias the probabilities for finding additional neighbors in the edge sampling process.
The key idea in resolving this issue is to run the edge sampling process (in each level) in a
carefully designed \emph{iterative} fashion. Intuitively, the first iterations in each level ``peels off'' the neighbors to which a
large fraction of $v$'s incident edges lead.
This increases the probability of finding one of the rest of the neighbors in later iterations.
Note that once $v$ found a neighbor $u$, node $v$ can identify all $v$'s edges leading to $u$ (and so ``peel them off'' from the next iterations), by having $u$ report to $v$ the IDs of all the edges touching $u$.

\subsection{Paper Organization}
\label{sec:paper-organization}
The rest of the paper is organized as follows.
Following some preliminary definitions provided in
Section~\ref{sec:preliminaries}, \Sampler{} is presented in
Section~\ref{sec:algorithm} and analyzed in Section~\ref{sec:analysis}.
For clarity of the exposition, we first present \Sampler{} as a centralized
algorithm and then, in Section~\ref{sec:distr-impl}, explain how it can be
implemented under the LOCAL model with the round and message complexities
promised in Theorem~\ref{thm:main}.
Section~\ref{sec:simulation} describes how   algorithm \Sampler{} is used
to obtain the message-reduction schemes we mentioned in the introduction.
Finally, we conclude the paper in Section~\ref{sec:concludingremarks}.

\section{Preliminaries}
\label{sec:preliminaries}
Consider some graph
$G = (V, E)$.
When convenient, we may denote the node set $V$ and edge set $E$ by
$V(G)$ and $E(G)$, respectively.
Unless stated otherwise, the graphs considered throughout this paper are
undirected and not necessarily simple, namely, the edge set $E$ may include
edge multiplicities (a.k.a.\ \emph{parallel} edges).
Given disjoint node subsets
$U, U' \subseteq V$,
let $E(U)$ denote the subset of edges with (exactly) one endpoint in $U$ and
let
$E(U, U') = E(U) \cap E(U')$
denote the subset of edges with one endpoint in $U$ and the
other in $U'$.
If
$U = \{ u \}$
and
$U' = \{ u' \}$
are singletons, then we may write $E(u)$ and $E(u, u')$ instead of
$E(\{ u \})$
and
$E(\{ u \}, \{ u' \})$,
respectively
(notice that
$E(u, u')$
may contain multiple edges when $G$ is not a simple graph).

Let
$\C = \{ C_{1}, \dots, C_{\ell} \}$
be a collection of non-empty pairwise disjoint node subsets referred to as
\emph{clusters} of $G$.\footnote{%
Notice that the union of the clusters is not required to be the
whole node set $V$. 
}
The \emph{cluster graph} (cf.~\cite{Peleg2000}) induced by $\C$ on $G$,
denoted by $G(\C)$, is the undirected graph whose nodes are identified with
the clusters in $\C$, and the edges connecting nodes $C_{i}$
and $C_{j}$,
$1 \leq i \neq j \leq \ell$,
correspond to the edges crossing between clusters $C_{i}$ and $C_{j}$ in
$G$, that is, the edges in
$E(C_{i}, C_{j})$.
Observe that $G(\C)$ may include edge multiplicities even if $G$ is a
simple graph. We denote by $\Ind_G(C)$ the subgraph of $G$ induced by $C$. When convenient, the term ``cluster $C$ applies also to $\Ind_G(C)$\footnote{Each cluster here will be a connected component.}
For $u, v, \in V$, the distance between $u$ and $v$ in $G$ is denoted by
$\Dist_G(u, v)$.

As stated in the introduction, the assumption about global parameters is
that each node knows an $O(1)$-approximate upper bound on
$\log n$. For the sake of simplicity, we treat the algorithm as if each node knows the exact value of
$\log n$, however, this is not essential.

\section{Constructing an $O(3^k)$-Spanner}
\label{sec:algorithm}

In the following argument, let $\delta = 1 / (2^{k+1} - 1)$ and $\epsilon = 1/h$
for short. Denote the simple graph input to the algorithm by
$G_{0} = (V_{0}, E_{0})$.
Algorithm \Sampler{} (see Pseudocode~\ref{pseudocode:spanner}) generates a
sequence
$G_{1}, \dots, G_{k}$
of graphs, where
$G_{j} = (V_{j}, E_{j})$.
Let $n_j$ and $m_j$ be the numbers of nodes and edges in $G_j$ respectively,
$N_{j}(v)$ be the set of neighbors in $G_{j}$ of node $v \in V_{j}$,
and $E_j(v, u)$ be the set of edges connecting $v$ to $u$ in $G_j$.
The process is executed in an iterative fashion, where in each iteration
$j = 0, \dots, k$,
the algorithm constructs a collection
$\C \subset 2^{V_{j}}$
of pairwise disjoint clusters and an edge set
$F \subseteq E_{j}$
that is added to the spanner edge set $S$ (as an exception, in the final iteration
of $j = k$, only $F$ is constructed but the cluster collection $\C$ is not created).
The graph $G_{j+1}$ is then defined to be the cluster graph
$G_{j}(\C)$
induced by $\C$ on $G_{j}$.
The construction of $\C$ and $F$ is handled by procedure \Cluster{j}
that is described soon.
To avoid confusion, in what follows, we fix
$n = n_0 = |V_{0}|$.

\begin{algorithm}
  \caption{\label{pseudocode:spanner}%
  \Sampler{}.}
  \begin{algorithmic}[1]
  \State{$n \leftarrow |V_{0}|$; $\delta \leftarrow 1 / (2^{k + 1} - 1)$;
  $S \leftarrow \emptyset$}
  \For{$j = 0, \dots, k$}
    \State{$\langle \C, F \rangle \leftarrow \Cluster{j}$}
    \State{$S \leftarrow S \cup F$}
    \If{$j < k$}
      \State{$G_{j+1} \leftarrow G_{j}(\C)$}
    \EndIf
  \EndFor
  \State{return $S$}
  \end{algorithmic}
\end{algorithm}

\paragraph*{Procedure \Cluster{j}}
$\;\;$ On input graph
$G_{j}$ ($0 \leq j \leq k$),
this procedure constructs the cluster collection
$\C \subset 2^{V_{j}}$
and edge set
$F = \cup_{v \in V_j} F_v$ ($F_v \subseteq E_{j}$) to be added to the spanner edges.
The procedure (see Pseudocode~\ref{pseudocode:cluster})
consists of two steps. In the first step, each node $v$ tries to identify
$\min\{c\exp_{n}(2^{j} \delta)\log n, |N_j(v)|\}$
neighbors by an iterative random-edge sampling process\footnote{For ease of writing long exponents, define
$\exp_a(x) = a^x$.}, where $c$ is a sufficiently
large constant to guarantee high success probability of the algorithm.
For this process, $v$ maintains a set $X_v \subseteq E_{j}(v)$ of the
edges that have not been explored yet. Initially, $X_v$ is set to $X_v = E_{j}(v)$;
the content of $X_v$ is then gradually eliminated by running
$2/\epsilon = 2h$ \emph{trials}. In every trial, each node $v \in V_{j}$
chooses $c^2 \exp_{n} (2^{j} \delta + \epsilon) \log^3 n$
edges from $X_v$ independently and uniformly at random (possibly choosing
the same edge twice or more). Each chosen edge is said to be
a \emph{query} edge. For a neighbor $u \in N_{j}(v)$ such that
$E_j(v, u)$ contains a query edge, we say that $u$ is \emph{queried} by
$v$. For each queried node $u \in N_{j}(v)$, $v$ adds one arbitrary
query edge $e \in E_{j}(u, v)$ to the edge set $F_v$,
and eliminates all the edges in $E_{j}(u, v)$ from $X_v$ (Figure~\ref{fig:cluster} (a)-(c)).
Then the procedure advances to the next trial unless
$|F_v| \geq c\exp_{n}(2^{j} \delta)\log n$ holds or $X_v$ is emptied.
Let $\hat{N}_{j}(v) \subseteq N_{j}(v)$ be the set of the nodes queried
by $v$ after finishing $2h$ trials. A node $v \in V_{j}$ is called
\emph{light} if $\hat{N}_j(v) = N_j(v)$ holds, or called
\emph{heavy} if $|N_j(v)| > |\hat{N}_j(v)| \geq c\exp_{n} (2^{j} \delta)\log n$. It is proved later that every node becomes either light or heavy whp.

In the second step, the algorithm creates (only if $j<k$) the vertex set $V_{j+1}$ 
by clustering the nodes in $V_{j}$. 
The algorithm marks
each node $v \in V_{j}$ as a \emph{center} w.p. $p_{j} =
\exp_{n} (-2^{j} \delta)$. Each node $v$ having a center $u$ contained
in $\hat{N}_{j}(v)$ is merged into $u$ (if two or more centers are contained,
an arbitrary one is chosen). A merged cluster corresponds to a node in $G_{j+1}$
(Figure~\ref{fig:cluster} (d)-(f)).
As a result, letting $\C \subset 2^{V_{j}}$ be the clusters inducing
the cluster graph $G_{j+1}$, each cluster $C = C(u) \in \C$ contains
exactly one center $u \in V_{j}$ and some subset of $N_{j}(u)$.
The node which is not
merged into any center is said to be an \emph{unclustered} node.
Due to some technical reason, all the nodes in $G_k$ are defined to be
unclustered. It is shown that every heavy node is merged into some center whp.,
and that every node in $G_k$ is light (proved later). Thus, every unclustered
node is light.

  \begin{algorithm}
    \caption{\label{pseudocode:cluster}%
    $\Cluster{j}$.}
    \begin{algorithmic}[1]
    \ForAll{$v \in V_{j}$}
      \State{$F'_v \leftarrow \emptyset$; $F_v \leftarrow \emptyset$;
      $X_v \leftarrow E_{j}(v)$; $i \leftarrow 1$}
      \State{/$\ast$ First Step $*$/}
      \While{$(i \leq 2h) \ \wedge \  (|F'_v| < c\exp_n(2^{j}\delta)\log n) \ \wedge \
      (|X_v| \neq \emptyset)$}%
      \Comment{run (at most) $2h$ trials}
        \For{$x=1$ to $c^2\exp_{n}(2^{j}\delta + \epsilon)\log^3 n $}
          \State{add an edge $e$ selected uniformly at random from $X_v$ to $F'_v$}
        \EndFor
        \While{$(F'_v \setminus F_v) \neq \emptyset$}
        \Comment{$F'_v \setminus F_v$ is the set of edges newly added in the current trial}
          \State{Pick an arbitrary edge $e = (v, u) \in F'_v \setminus F_v$}
          \State{Remove all the edges incident to $u$ from $X_v$}
          \State{Remove all the edges incident to $u$ other than $e$ from $F'_v$}
          \State{$F_v \leftarrow F_v \cup \{ e \}$}
        \EndWhile
        \State{$i \leftarrow i + 1$}
      \EndWhile
    \EndFor
    \State{$F \leftarrow \cup_{v \in V_j} F_v$}
    \State{/$\ast$ Second Step $*$/}
    \If{$j < k$}
      \ForAll{$v \in V_{j}$}
      \State{mark $v$ as a center and create $C(v) = \{v\}$ w.p. $\exp_{n} (-2^{j} \delta)$}
      \EndFor
      \ForAll{non-center $v \in V_{j}$}
        \If{$\exists (v, u) \in F : u$ is a center}%
          \State{$C(u) \leftarrow C(u) \cup \{v\}$}
        \EndIf
      \EndFor
    \EndIf
    \State{return $\langle \{ C(u) \mid u \text{ is a center} \}, F \rangle$}
    \end{algorithmic}
    \end{algorithm}

\section{Analysis}
\label{sec:analysis}
Throughout this section, we refine the definition of terminology ``whp.'' to claim
that the probabilistic event considered in the context holds with probability
$1 - 1/n^{\Theta(c)}$ for parameter $c$ defined in the algorithm. 
With a small abuse of probabilistic arguments,
we treat those events as if they necessarily occur (with probability one).
Since we only handle a polynomially-bounded number of probabilistic events
in the proof, the standard union-bound argument ensures that any consequence
of the analysis also holds whp. for a sufficiently large $c$.
We begin the analysis by bounding the number of nodes in graph $G_{j}$.
Denote $\hat{p}_j = \prod_{0 \leq i \leq j} p_i$. It is easy to check that
$\hat{p}_{j - 1} = \exp_{n}(-(2^{j} - 1)\delta) $ for $1 \leq j \leq k$.
\begin{lemma} \label{lem:bound-graph-size}
  Graph $G_{j}$ satisfies
  $n\hat{p}_{j-1}/2 \leq n_j \leq 3n\hat{p}_{j-1}/2$
  whp. for $1 \leq j \leq k$.
  \end{lemma}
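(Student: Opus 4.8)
The plan is to prove the bound by induction on $j$, exploiting the fact that the node count of the next level obeys a clean binomial law. The key structural observation is that $n_{j+1}$ equals the number of nodes of $V_j$ that are marked as centers in the second step of $\Cluster{j}$: every cluster of $\C$ contains exactly one center, unclustered nodes are discarded, and $V_{j+1}$ is identified with $\C$, so the count $n_{j+1}$ is insensitive to how the non-centers merge. Since each node of $V_j$ is marked as a center independently with probability $p_j = \exp_{n}(-2^{j}\delta)$ by fresh coin flips that are independent of everything determining $V_j$, conditioning on the node set $V_j$ yields that $n_{j+1}$ is distributed as $\mathrm{Binomial}(n_j, p_j)$, with conditional mean $n_j p_j$. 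This reduces the whole lemma to controlling a chain of conditionally binomial variables.

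First I would handle the base case $j = 1$, where $n_0 = n$ is deterministic, so $n_1 \sim \mathrm{Binomial}(n, p_0)$ with mean $n p_0 = n\hat{p}_0$, and a multiplicative Chernoff bound concentrates $n_1$ around this mean. For the inductive step I would condition on the event that $n_j \in [n\hat{p}_{j-1}/2,\, 3n\hat{p}_{j-1}/2]$; using the identity $\hat{p}_j = \hat{p}_{j-1}p_j$, the conditional mean then satisfies $n_j p_j \in [n\hat{p}_j/2,\, 3n\hat{p}_j/2]$, and a Chernoff bound pins $n_{j+1}$ to a multiplicative $(1\pm\eta)$ window around $n_j p_j$. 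The quantitative ingredient that lets Chernoff deliver high probability is a lower bound on these means: a short computation gives $n\hat{p}_j = \exp_{n}\!\big((2^{k+1}-2^{j+1})/(2^{k+1}-1)\big)$, which is minimized over $j \le k-1$ at $j = k-1$, where it equals $\exp_{n}\!\big(2^{k}/(2^{k+1}-1)\big) \ge n^{1/2}$. Hence on the inductive event every conditional mean in play is at least $\tfrac12 n^{1/2} = n^{\Omega(1)}$, so taking $\eta = \Theta(n^{-1/4}\sqrt{\log n})$ makes each Chernoff failure probability at most $n^{-\Theta(c)}$ while keeping $\eta = o(1)$.

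The main obstacle, and the only genuinely delicate point, is that these multiplicative errors accumulate across levels, so the naive factor-$2$ induction does not literally close: applying a fresh $(1+\eta)$ factor to the upper endpoint $3n\hat{p}_j/2$ would overshoot the constant $3/2$. I would resolve this by tracking the accumulated relative deviation directly rather than re-deriving the window at each step. Setting $r_j := n_{j+1}/(n_j p_j)$, the identity $\hat{p}_j = \hat{p}_{j-1}p_j$ gives $n_{j+1}/(n\hat{p}_j) = r_j \cdot n_j/(n\hat{p}_{j-1})$, so the relative size telescopes into a product $\prod_{i=0}^{j-1} r_i$ of at most $k \le \log\log n$ factors, each lying in $[1-\eta,\,1+\eta]$ whp. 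Because $k\eta = O(\log\log n \cdot n^{-1/4}\sqrt{\log n}) = o(1)$, this product lies in $[1/2,\,3/2]$ for all large $n$, which is exactly the asserted window for every $1 \le j \le k$.

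Finally, since there are only $O(k) = O(\log\log n)$ Chernoff events, each failing with probability $n^{-\Theta(c)}$, a union bound establishes that all of them hold simultaneously whp; on their intersection the telescoped bound above holds for every level at once, proving $n\hat{p}_{j-1}/2 \le n_j \le 3n\hat{p}_{j-1}/2$ whp for all $1 \le j \le k$.
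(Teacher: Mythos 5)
Your proposal is correct and follows essentially the same route as the paper: condition on $n_{j-1}$ to view $n_j$ as $\mathrm{Binomial}(n_{j-1},p_{j-1})$, apply a Chernoff bound whose relative error is small because $n\hat{p}_{j-1}\geq n^{1/2}$ for $j\leq k$, and absorb the accumulated multiplicative deviations over the at most $k$ levels into the constants $1/2$ and $3/2$. Your telescoping product $\prod_i r_i$ is just a slightly more explicit bookkeeping of the paper's $(1\pm\sqrt{c\log n/(n_{j-1}p_{j-1})})^{j-1}$ factor, so no substantive difference.
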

  \begin{proof}
    The value $n_j$ follows the binomial distribution of $n_{j-1}$ trials and success
    probability $p_{j-1}$. Applying Chernoff bound (under conditioning $n_{j-1}$),
    the inequality below holds for all $1 \leq j \leq k$ whp.\footnote{Chernoff bound
    for the binomial distribution $X$ of $m$ trials and success probability $p$ is
    $\Pr[|X - mp|\geq \alpha mp] \leq 2e^{-\frac{\alpha^2 mp}{3}}$.}
    \[
      \left(1 - \sqrt{\frac{c \log n}{n_{j-1}p_{j-1}}}\right)n_{j-1}p_{j-1}
      \leq
      n_j
      \leq
      \left(1 + \sqrt{\frac{c \log n}{n_{j-1}p_{j-1}}}\right)n_{j-1}p_{j-1}.
    \]
    By the definition, $n_j$ and $p_j$ are both non-increasing with respect to $j$.
    Hence we have
    \begin{align*}
      \label{eq:boundnumber}
      \left(1 - \sqrt{\frac{c \log n}{n_{j-1}p_{j-1}}}\right)^{j-1} n\hat{p}_{j-1}
      \leq
      n_j
      \leq
      \left(1 + \sqrt{\frac{c \log n}{n_{j-1}p_{j-1}}}\right)^{j-1} n\hat{p}_{j-1}.
    \end{align*}
    \sloppy{
    We prove the lemma inductively (on $j$) following this inequality.
    For $j = 1$, $n_{j-1} = n$ and thus $|\sqrt{c \log n / n_{j-1}p_{j-1}}| \leq 1/2$
    holds. From the inequality above, we obtain $n_0p_0/2 \leq n_1 \leq 3n_0p_0/2$.
    Suppose $n\hat{p}_{j-2}/2 \leq n_{j-1} \leq 3n\hat{p}_{j-1}/2$. Since
    $n\hat{p}_{j - 1} = \exp_{n}(1 - (2^{j} - 1)\delta) \geq n^{1/2}$ holds for
    $j \leq k$, we have $\sqrt{c \log n / (n_{j - 1}p_{j-1})} \leq
    c\log n / n^{1/2} \ll 1/2(j-1)$ for sufficiently large $n$.
    Applying the approximation of $(1 + x)^y \approx 1 + xy$
    for $|x| \ll 1$ to the inequality, we obtain the lemma.
    }
  \end{proof}
We next prove the facts mentioned in the explanation of
the procedure \Cluster{j}.
\begin{lemma} \label{lem:heavynode}
  For any $0 \leq j \leq k-1$, any heavy node $v \in V_{j}$ contains
  at least one center in $\hat{N}_{j}(v)$ whp.
\end{lemma}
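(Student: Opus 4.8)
The plan is to exploit the fact that whether $v$ is heavy and the resulting set $\hat{N}_{j}(v)$ are determined entirely by the \emph{first} step of $\Cluster{j}$ (the iterative edge-sampling process), whereas the center markings are drawn afterwards, in the \emph{second} step, using fresh independent randomness. I would therefore condition on the full outcome of the first step — which fixes $\hat{N}_{j}(v)$ — and then argue over the center-marking randomness alone. The key structural observation is that, conditioned on this outcome, the events ``$u$ is a center'' for the various $u \in \hat{N}_{j}(v)$ are mutually independent, each occurring with probability $p_{j} = \exp_{n}(-2^{j}\delta)$.

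First I would recall that, by definition of a heavy node, $|\hat{N}_{j}(v)| \geq c\exp_{n}(2^{j}\delta)\log n$. Working conditioned on the first step, the probability that $\hat{N}_{j}(v)$ contains \emph{no} center is
\[
  (1 - p_{j})^{|\hat{N}_{j}(v)|}
  \leq
  \exp\!\left(-p_{j}\cdot|\hat{N}_{j}(v)|\right)
  \leq
  \exp\!\left(-c\log n\right)
  = n^{-c},
\]
where the middle inequality is $1 - x \leq e^{-x}$ and the last step uses
$p_{j}\cdot|\hat{N}_{j}(v)| \geq \exp_{n}(-2^{j}\delta)\cdot c\exp_{n}(2^{j}\delta)\log n = c\log n$.
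Since this bound holds for every outcome of the first step under which $v$ is heavy, it holds unconditionally for the event that a heavy $v$ has no center in $\hat{N}_{j}(v)$.

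Finally I would take a union bound over all nodes $v \in V_{j}$ and all levels $0 \leq j \leq k-1$; as the number of (node, level) pairs is polynomially bounded in $n$, the per-node failure probability $n^{-c}$ survives the union bound (with the loss absorbed into the $\Theta(c)$ in the definition of whp), yielding the claim. The only delicate point — and the nearest thing to an obstacle here — is justifying the independence between $\hat{N}_{j}(v)$ and the center markings: the whole argument rests on the center-marking coins being tossed independently of, and after, the edge-sampling coins that produce $\hat{N}_{j}(v)$. Because $\Cluster{j}$ realizes its two steps with disjoint sources of randomness, this conditioning is valid and the computation above goes through cleanly; I expect no further complication beyond stating this independence explicitly.
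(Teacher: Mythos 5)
Your proposal is correct and follows essentially the same route as the paper's proof: both use the bound $(1-p_j)^{|\hat{N}_j(v)|} \leq n^{-c}$ derived from the definition of a heavy node and the fact that $p_j \cdot c\exp_n(2^j\delta)\log n = c\log n$. The extra care you take in conditioning on the first step and invoking the union bound is left implicit in the paper (it is covered by the blanket convention stated at the start of Section~\ref{sec:analysis}), but it does not change the argument.
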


\begin{proof}
\sloppy{
The probability that no center is contained in $\hat{N}_{j}$ is
$(1 - p_{j})^{|\hat{N}_j(v)|}$. Since $|\hat{N}_j(v)| \geq
c \exp_{n}(2^{j}\delta) \log n = c\log n/p_{j}$ holds for any heavy node $v$,
the probability is at most $1/n^c$.
}
\end{proof}

\begin{figure}[t]
  \centering
  \includegraphics[width=110mm,keepaspectratio]{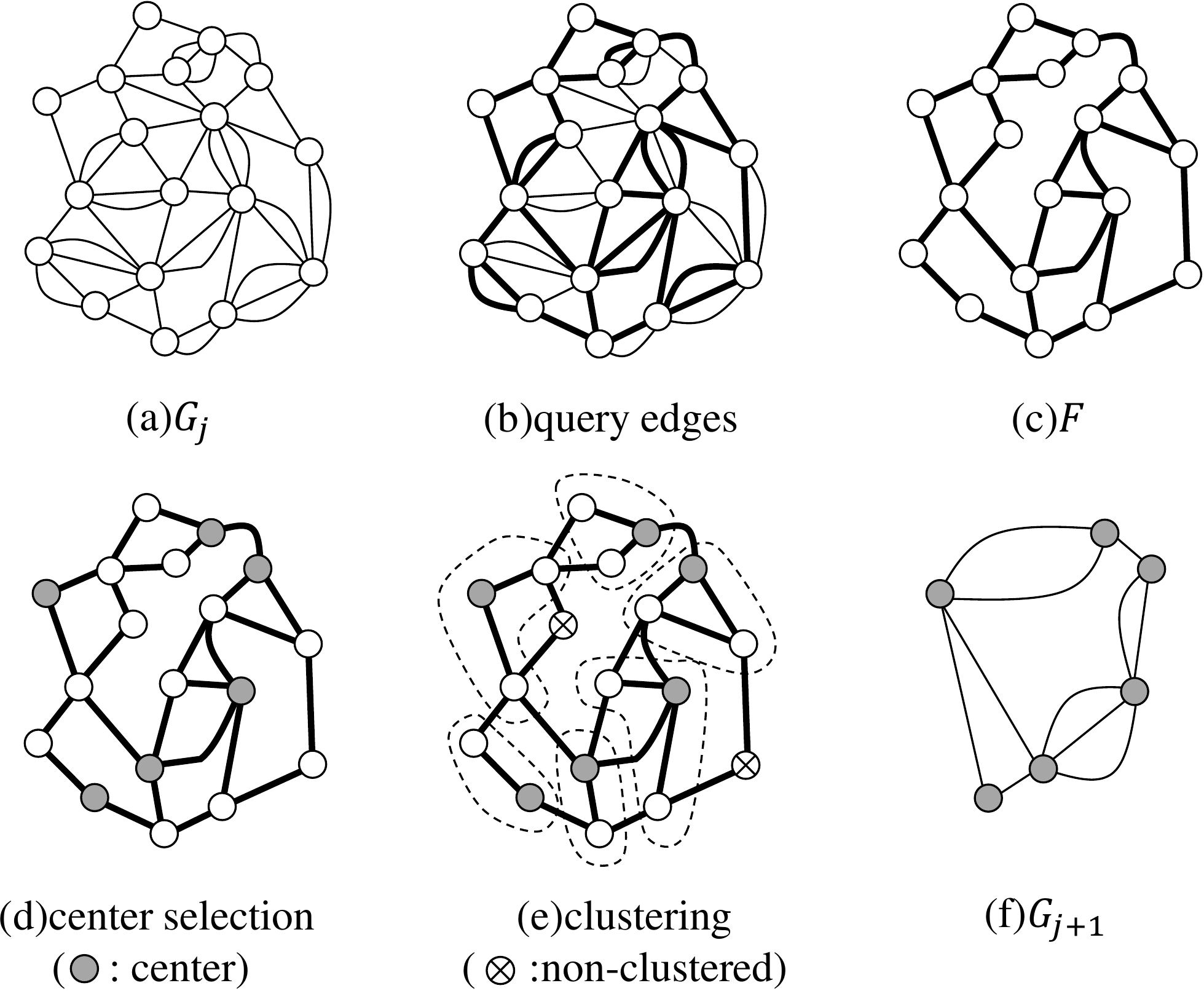}
  \caption{Procedure \Cluster{j}}
  \label{fig:cluster}
\end{figure}

\begin{lemma} \label{lem:unclustered}
For any $0 \leq j \leq k$, any node $v \in V_{j}$ becomes light or heavy whp.
Furthermore, any node $v \in V_{k}$ becomes light whp.
\end{lemma}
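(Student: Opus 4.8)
The plan is to track, for a fixed node $v\in V_{j}$, the mass $M_{i}=|X_{v}|$ of as-yet unexplored edges at the start of trial $i$ of \Cluster{j}, and to show that as long as $v$ has not already become heavy, each trial shrinks this mass by a factor of at least $\exp_{n}(\epsilon)=n^{1/h}$ whp. Since the while loop runs $2h=2/\epsilon$ trials, this drives the mass down by a total factor of $\exp_{n}(2h\epsilon)=\exp_{n}(2)=n^{2}$; as $M_{1}=|E_{j}(v)|\le m_{j}\le m_{0}<n^{2}$ (edge contraction never increases the edge count and $G_{0}$ is simple), the mass reaches $0$ before the trials are exhausted. In other words, whp either $v$ becomes heavy at some trial, or $X_{v}$ is emptied, i.e.\ $\hat{N}_{j}(v)=N_{j}(v)$ and $v$ is light; this is exactly the first claim. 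Note that the choice $2h=2/\epsilon$ is calibrated precisely so that the $n^{2}$ total contraction matches the largest conceivable degree.

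The heart of the argument is the per-trial contraction. Write $q=c^{2}\exp_{n}(2^{j}\delta+\epsilon)\log^{3}n$ for the number of sampled edges and $T=c\exp_{n}(2^{j}\delta)\log n$ for the heavy threshold, so that $q/T=c\,\exp_{n}(\epsilon)\log^{2}n$, and let $d_{u}=|E_{j}(u,v)\cap X_{v}|$ denote current multiplicities, with $\sum_{u}d_{u}=M_{i}$. Let $D_{i}$ be the number of previously undiscovered neighbors queried in trial $i$, so $\Ex[D_{i}]=\sum_{u}\bigl(1-(1-d_{u}/M_{i})^{q}\bigr)$ over the remaining neighbors $u$. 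The expected surviving mass is $\sum_{u}d_{u}(1-d_{u}/M_{i})^{q}\le\sum_{u}d_{u}e^{-x_{u}}=(M_{i}/q)\sum_{u}x_{u}e^{-x_{u}}$ with $x_{u}=qd_{u}/M_{i}$; using the elementary inequality $xe^{-x}\le 1-e^{-x}$ this is at most $(M_{i}/q)\sum_{u}(1-e^{-x_{u}})\le(M_{i}/q)\,\Ex[D_{i}]$. If $v$ does not become heavy then $|\hat{N}_{j}(v)|<T$ throughout, so every $D_{i}<T$; combined with a concentration bound (below) this forces $\Ex[D_{i}]=O(T)$, whence the expected surviving mass is $O(M_{i}T/q)=O\!\bigl(M_{i}/(\exp_{n}(\epsilon)\log^{2}n)\bigr)$, comfortably below $M_{i}/\exp_{n}(\epsilon)$.

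To upgrade this expectation bound to high probability I would first run the \emph{peeling} step: any neighbor with $d_{u}\ge M_{i}(c'\log n)/q$ is missed with probability at most $(1-d_{u}/M_{i})^{q}\le e^{-c'\log n}=n^{-c'}$, so whp all such high-multiplicity neighbors are queried and their edges removed from $X_{v}$, leaving every survivor with multiplicity below $\tau=M_{i}(c'\log n)/q$. Capping the weight of every surviving term by $\tau$ is exactly what concentration requires: partitioning the survivors into $O(\log n)$ dyadic multiplicity classes, the hit indicators within a class are negatively associated, so a Chernoff bound makes the number of discovered (hence surviving) neighbors concentrate, and the bounded multiplicities convert these counts back to a whp bound on the surviving \emph{mass} (the same Chernoff argument, applied to $D_{i}$, justifies passing from $D_{i}<T$ to $\Ex[D_{i}]=O(T)$ used above). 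A union bound over the $\le 2h$ trials, the $\le n$ nodes, and the $k+1$ levels yields that every node is light or heavy whp, proving the first statement. For the ``furthermore'', I invoke Lemma~\ref{lem:bound-graph-size}: for $j=k$ it gives $n_{k}\le\tfrac{3}{2}n\hat{p}_{k-1}=\tfrac{3}{2}\exp_{n}(2^{k}\delta)$, using $1-(2^{k}-1)\delta=2^{k}\delta$. The level-$k$ heavy threshold $c\exp_{n}(2^{k}\delta)\log n$ then exceeds $n_{k}>|N_{k}(v)|$ for large $n$, so no node of $V_{k}$ can satisfy the heavy condition, and the first statement forces every such node to be light whp.

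The main obstacle is precisely this per-trial contraction under \emph{skewed} edge multiplicities, which is the whole reason the sampling is iterative. A single round of sampling may discover only a handful of distinct neighbors when most edges lead to a few high-multiplicity neighbors, so no naive ``discover many neighbors at once'' bound holds; peeling those heavy neighbors after each trial is what guarantees the geometric mass decay and simultaneously caps the surviving weights. The two delicate points are (i) pinning the decay factor at $\exp_{n}(\epsilon)$ with the constants arranged so that exactly $2h$ trials suffice to empty $X_{v}$, and (ii) turning the expected-mass estimate into a high-probability statement through the dyadic concentration argument enabled by peeling.
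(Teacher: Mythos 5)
Your proposal is correct in substance, but it proves the lemma by a genuinely different route than the paper. The paper's proof is combinatorial: it partitions the not-yet-queried neighbors into $2h$ multiplicity classes $\K^i_x(v)$ (class $x$ holding neighbors $u$ with $|E_j(v,u)|\in(\exp_n(x\epsilon),\exp_n((x+1)\epsilon)]$), observes that the maximum-volume class carries at least a $1/(2h)$ fraction of $|X_v|$ and hence that every node in it has volume at least $m_i/(2h|\K^i(v)|n^\epsilon)$, and then groups that class into blocks of size $\beta$ so that each block is hit whp by the $c^2\exp_n(2^j\delta+\epsilon)\log^3 n$ samples. The progress measure is discrete: each trial either eliminates an entire multiplicity class (when $|\K^i(v)|\le\alpha$, so $\beta=1$) or queries at least $\alpha/3\ge c\exp_n(2^j\delta)\log n$ new neighbors (making $v$ heavy), and since there are only $2h$ classes, $2h$ trials suffice. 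Your progress measure is instead the continuous mass $M_i=|X_v|$, driven down by a factor $n^\epsilon$ per trial via the clean inequality $\sum_u d_u(1-d_u/M_i)^q\le (M_i/q)\,\Ex[D_i]$ together with the observation that non-heaviness forces $\Ex[D_i]=O(T)$; the calibration $2h=2/\epsilon$ against $m_0<n^2$ then closes the induction. What your approach buys is a more transparent explanation of why the sample size carries the extra $n^\epsilon$ factor and why exactly $2h$ trials are needed; what it costs is that you must establish high-probability concentration of a \emph{weighted} sum of survivor masses, which is where almost all the technical work lives (peeling the neighbors with $d_u\ge M_i c'\log n/q$, negative association of the miss indicators, and a Bernstein/dyadic-class bound with the cap $\tau$), whereas the paper only ever needs a union bound over the events ``group $K_\ell$ receives no query edge.'' Your treatment of the ``furthermore'' clause coincides with the paper's: Lemma~\ref{lem:bound-graph-size} gives $n_k\le\tfrac{3}{2}\exp_n(2^k\delta)$, so $|N_k(v)|$ is below the heavy threshold and the first claim forces lightness. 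I see no gap, only the caveat that the concentration step you defer to ``(below)'' is the part that genuinely needs to be written out, including the conditioning on the trial-$i$ history when you pass from the observed $D_i<T$ to the bound $\Ex[D_i]=O(T)$.
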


\begin{proof}
  Let $\alpha = (3c \exp_n(2^{j}\delta) \log^2 n) /h$ for short.
  For $W \subseteq N_{j}(v)$, define
  $E_{j}(v, W) = \bigcup_{u \in W} E_{j}(v, u)$.
  Let $N^i_{j}(v) \subseteq N_{j}(v)$ be the set of the nodes not
  queried by $v$  at the beginning of the $i$-th trial, and $m_i$ be the numbers
  of edges in $X_v$ at the beginning of the $i$-th trial.
  For any node $u \in N^i_j(v)$, the value of $|E_{j}(v, u)|$
  is called the \emph{volume} of $u$. Similarly, for any $X \subseteq N^i_j(v)$,
  we call the value of $|E_{j}(v, X)|$ the volume of $X$.
  Divide $N^i_{j}(v)$ into $2/\epsilon = 2h$ classes: A node
  $u \in N^i_{j}(v)$ belongs to the $x$-th class $\K^i_x(v)$ if
  $|E_{j}(v, u)| \in (\exp_{n}(x\epsilon), \exp_{n}((x+1)\epsilon)]$ holds
  ($0 \leq x \leq 2h - 1$). Let $\K^i(v)$ be the maximum-volume class
  of all at the beginning of the $i$-th trial. Since
  the volume of $\K^i(v)$ is at least $m_i/2h$, there exists
  at least one node $u \in \K^i(v)$ satisfying
  $|E_{j}(v, u)| \geq m_i/(2h|\K^i(v)|)$. By the definition of class $\K^i(v)$,
  it implies that the volume of any node $u \in \K^i(v)$ is at least
  $m_i/(2h|\K^i(v)|{n^{\epsilon}})$.

  Let $\beta$ be the non-negative integer satisfying $(\beta - 1)\alpha
  \leq |K^i(v)| \leq \beta \alpha$. Then we consider an arbitrary partition
  of $\K^i(v)$ into $q = \lfloor |\K^i(v)|/\beta \rfloor$ groups
  $K_1, K_2, \dots, K_q$ of size $\beta$. Note that $\beta q$ is not
  necessarily equal to $|\K^i(v)|$, but the residuals are omitted.
  Since any node in $\K^i(v)$ has a volume at least
  $m_i/(2h|\K^i(v)|{n^{\epsilon}})$, the volume of $K_{\ell}$ ($1 \leq \ell \leq q$)
  is at least $\beta m_i/(2h|\K^i(v)|{n^{\epsilon}})$. Thus
  the probability that a query edge is sampled from $E_{j}(v, K_{\ell})$ is
  at least $\beta /(2h|\K^i(v)|{n^{\epsilon}}) \geq 1 / (2h\alpha{n^{\epsilon}})$.
  Letting $Z_{\ell}$ be the number of query edges in $E_{j}(v, K_{\ell})$ created at
  the $i$-th trial, for any $1 \leq \ell \leq q$, we have
  \[
    \Pr[Z_{\ell} = 0] \leq \left(1 -
    \frac{1}{2h\alpha{n^{\epsilon}}}\right)
    ^{c^2\exp_{n}(2^{j}\delta + \epsilon)\log^3 n }
    \leq \exp_{e}\left(-\frac{c^2\exp_n(2^{j}\delta)\log^3 n }{2h\alpha}\right)
    \leq e^{-c\log n /6}.
  \]
  Thus, in every trial, at least one node in each group $K_{\ell}$ is queried
  by $v$ whp. If $|\K^i(v)| \leq \alpha$ holds, $\beta = 1$ holds and thus
  each group consists of a single node in $\K^i(v)$. Thus all nodes
  in $\K^i(v)$ are queried by $v$ whp. in the $i$-th trial (note
  that no node becomes a residual in the case of $\beta = 1$). Otherwise, $q =
  \lfloor |\K^i(v)|/\beta \rfloor \geq \lfloor (\beta - 1)\alpha/\beta \rfloor
  \geq \alpha/3 = (c \exp_n(2^{j}\delta) \log^2 n) /h \geq c \exp_n(2^{j}\delta)
  \log n$ holds, and thus
  $v$ queries at least $c \exp_n(2^{j}\delta) \log n$ nodes in the $i$-th trial.
  Consequently, if $|\K^i(v)| \leq \alpha$ holds for all $1 \leq i \leq 2h$, $v$
  queries all nodes in $\K^i(v)$ at the $i$-th trial, that is,
  queries all nodes in $N_j(v)$ throughout the run of \Cluster{j}. 
  Then $v$ becomes light. If $|\K^i(v)| > \alpha$ holds for some $i$,
  $v$ queries at least $c \exp_n(2^{j}\delta) \log n$ nodes in $\K^i(v)$,
  which implies $v$ becomes heavy or light.

  Finally, let us show that any node $v \in V_k$ is light. Since
  $n_k \leq 3\exp_{n}(1 - (2^{k} - 1)\delta)/2 = 3\exp_{n}(2^k\delta)/2 \leq
  (3c\exp_{n}(2^{k}\delta)\log^2 n)/h = \alpha$ holds from Lemma
  \ref{lem:bound-graph-size}, we have $|N_{j}(v)| \leq n^k \leq \alpha$.
  Since $\K^i(v)$ is a subset of $N_j(v)$, $|\K^i(v)| \leq \alpha$ holds
  for all $i$. By the argument above, then $v$ is light.
\end{proof}

The rest of the analysis is divided into two parts:
First, in Section~\ref{sec:bounding-stretch}, we analyze the stretch of $H$,
proving that it is at most $\kappa = O(3^k)$.
Section~\ref{sec:bound-numb-edges} then establishes an
$\tilde{O} (n^{1 + \delta})$
upper bound on the number
of edges in $H$.

\subsection{Bounding the Stretch}
\label{sec:bounding-stretch}

The following lemma is a well-known fact.
\begin{lemma}[\cite{peleg1989graph}] \label{lem:spanner}
Let $H = (V, X)$ be any (spanning) subgraph of $G=(V, E)$ and
$\C$ be the partition of $V$ such that for any $C \in \mathcal{C}$,
$\Ind_H(C)$ has a diameter at most $\ell$. If $X$ contains
at least one edge in $E(C_i, C_j)$ for any pair $(C_i, C_j) \in \mathcal{C}^2$
such that $E(C_i, C_j)$ is nonempty, $H$ is a $(2\ell + 1)$-spanner.
\end{lemma}

Let $V'_j \subseteq V_j$ ($1 \leq j \leq k - 1$) be
the set of the nodes unclustered in the run of \Cluster{j},
and $V' = \bigcup_{1 \leq j \leq k - 1} V'_j$. We define
$C_j(v) \subseteq V$ as the set of nodes in $V$ which are
clustered into $v \in V_j$, and also define $r(v)$ as the value
$j$ satisfying $v \in V'_j$ for any $v \in V'$.
Let $C(v) = C_{r(v)}(v)$ for short.

\begin{lemma} \label{lem:part-diam}
Let $H = (V, S)$ be the (spanning) subgraph output by {\Sampler}.
The diameter of $\Ind_H(C_j(v))$ for any $v \in V_j$ is at
most $r = 3^j - 1$.
\end{lemma}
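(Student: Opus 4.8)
The plan is to prove the statement by induction on $j$, exploiting the recursive (star-like) structure of the clusters. The key picture is that a level-$j$ cluster is built from a \emph{central} level-$(j-1)$ cluster together with several other level-$(j-1)$ clusters, each attached to the center by a single spanner edge; the diameter therefore grows by only a constant factor per level.

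First I would pin down the recursive description of $C_j(v)$. For $j = 0$ we have $C_0(v) = \{v\}$, so $\Ind_H(C_0(v))$ has diameter $0 = 3^0 - 1$, settling the base case. For $j \geq 1$, a node $v \in V_j$ is, by the construction $G_j = G_{j-1}(\C)$, identified with a cluster $C(u) \in \C$ produced by $\Cluster{j-1}$, whose members are a center $u \in V_{j-1}$ together with non-center nodes $w$, each of which, upon joining $C(u)$, had an edge $(w, u) \in F$. Unfolding one level, $C_j(v) = \bigcup_{w \in C(u)} C_{j-1}(w)$, a disjoint union over the members $w$ of $C(u)$ of their level-$(j-1)$ node sets.

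The crux of the argument is the bookkeeping linking edges of the contracted graph $G_{j-1}$ back to edges of the original graph $G_0$. Because $G_{j-1}$ is obtained by iterated cluster contraction with parallel edges retained, every edge of $E_{j-1}$ is an identified copy of an original edge of $E_0$, and the edges placed in $F \subseteq E_{j-1}$ enter the spanner edge set $S$ as those original edges. Hence the join edge $(w, u) \in F$ between a member $w$ and the center $u$ is, in $H = (V, S)$, an original edge having one endpoint in $C_{j-1}(w)$ and the other in $C_{j-1}(u)$. I expect the main technical point to be making this correspondence precise — verifying that the retained original edge's endpoints indeed lie in the claimed level-$(j-1)$ node sets — since this is where the recursive contraction semantics must be spelled out; everything else is a short distance computation.

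Granting this correspondence, the inductive step closes quickly. By the inductive hypothesis each $\Ind_H(C_{j-1}(w))$ has diameter at most $3^{j-1}-1$. Take any $x, y \in C_j(v)$; the worst case is $x \in C_{j-1}(w_1)$ and $y \in C_{j-1}(w_2)$ for distinct non-center members $w_1, w_2$. Routing $x$ inside $C_{j-1}(w_1)$ to the endpoint of the join edge $(w_1,u)$, crossing into $C_{j-1}(u)$, traversing $C_{j-1}(u)$ to the endpoint of the join edge $(w_2,u)$, crossing into $C_{j-1}(w_2)$, and routing to $y$, gives a path in $H$ of length at most $3(3^{j-1}-1) + 2 = 3^{j}-1$. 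The remaining cases — both endpoints in the same sub-cluster, or one endpoint lying in the central sub-cluster $C_{j-1}(u)$ — yield strictly smaller bounds. This establishes the level-$j$ bound and completes the induction.
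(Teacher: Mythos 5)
Your proposal is correct and follows essentially the same route as the paper's proof: induction on $j$, viewing a level-$j$ cluster as a star of level-$(j-1)$ sub-clusters joined to a central one by single spanner edges, with the same bound $3(3^{j-1}-1)+2 = 3^j-1$. You are in fact more explicit than the paper about the correspondence between edges of the contracted graph $G_{j-1}$ and original edges of $G_0$, which the paper leaves implicit.
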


\begin{proof}
We show that $\Ind_H(C_j(v))$ contains a spanning tree of $\Ind_G(C_j(v))$ with height
at most $3^j - 1$. The proof follows the induction on $j$.
For $j = 0$, $\Ind_H(C_j(v)) = \Ind_G(C_j(v))$ is a graph consisting of
a single node, and thus its diameter is zero. Suppose as the induction hypothesis
that the lemma holds for some $j$, and consider the case of $j+1$. Since any node
$v \in V_{j+1}$ corresponds to a center node $v \in V_j$ and a star-based
connection with its neighbors in $V_j$, $\Ind_H(C_j(v))$ is obviously contains
a spanning tree of $\Ind_G(C_j(v))$ with a diameter at most
$3(3^j - 1) + 2 = 3^{j+1} - 1$. The lemma follows.
\end{proof}

Finally the following theorem is deduced.
\begin{theorem}
The graph
$H = (V, S)$ output by {\Sampler} is an $O(2\cdot3^k - 1)$-spanner of
$G$ whp.
\end{theorem}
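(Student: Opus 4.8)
The plan is to apply Lemma~\ref{lem:spanner} to the partition $\C^*$ of $V$ into its \emph{final clusters}: for each node $v$ that becomes unclustered at some level $j \leq k$ (recall that all of $V_k$ is unclustered), include the cluster $C_j(v)$. Following any node of $V$ up the hierarchy until it is first unclustered shows that $\C^*$ is indeed a partition of $V$. Two things then must be checked: (a) each $\Ind_H(C)$ with $C \in \C^*$ has diameter at most some uniform $\ell$; and (b) $S$ contains at least one edge of $E(C, C')$ for every pair $C, C' \in \C^*$ with $E(C, C') \neq \emptyset$ in $G$. Granting (a) and (b), Lemma~\ref{lem:spanner} immediately gives that $H$ is a $(2\ell + 1)$-spanner.

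Property (a) is handed to us by Lemma~\ref{lem:part-diam}: since $\Ind_H(C_j(v))$ has diameter at most $3^j - 1$ for every $v \in V_j$, and every cluster of $\C^*$ is of the form $C_j(v)$ with $j \leq k$, the uniform bound $\ell = 3^k - 1$ holds. Substituting into Lemma~\ref{lem:spanner} yields a $(2(3^k - 1) + 1) = (2 \cdot 3^k - 1)$-spanner, which is exactly the claimed stretch.

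The substance lies in property (b), which I would establish by tracking a crossing edge up the hierarchy. Fix distinct $A, B \in \C^*$ joined by an original edge $e = (a, b) \in E$ with $a \in A$, $b \in B$, and for each level $j$ let $a_j, b_j \in V_j$ be the clusters containing $a, b$. Because $A \neq B$, the nodes $a_j$ and $b_j$ are never merged into a common cluster, so they stay distinct at every level at which both exist; and by the cluster-graph construction the edge $e$ survives as a crossing edge, keeping $a_j$ and $b_j$ adjacent in $G_j$. Let $j^*$ be the first level at which one of them becomes unclustered, taking $j^* = k$ if neither does earlier. At level $j^*$ both $a_{j^*}$ and $b_{j^*}$ still exist and at least one, say $a_{j^*}$, is unclustered. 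By Lemmas~\ref{lem:heavynode} and~\ref{lem:unclustered} every unclustered node is light whp, so $\hat{N}_{j^*}(a_{j^*}) = N_{j^*}(a_{j^*})$ and the query step of $\Cluster{j^*}$ places into $F_{a_{j^*}} \subseteq S$ an edge to each neighbor of $a_{j^*}$ in $G_{j^*}$, in particular to $b_{j^*}$. Such an edge corresponds to an original edge of $E(C_{j^*}(a_{j^*}), C_{j^*}(b_{j^*})) \subseteq E(A, B)$, which establishes (b).

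The main obstacle is property (b), specifically arguing that the crossing edge is never lost and is ultimately captured in $S$. This rests on two facts: that distinct final clusters are represented by distinct, $G_j$-adjacent nodes at the critical level $j^*$, so contraction never absorbs the edge prematurely; and that an unclustered node is necessarily light and hence contributes an edge to \emph{every} one of its neighbors. Both facts hold only whp, inherited from Lemmas~\ref{lem:heavynode} and~\ref{lem:unclustered}, which is the source of the whp qualifier in the statement. The remaining steps, namely the diameter bound and the invocation of Lemma~\ref{lem:spanner}, are deterministic and routine.
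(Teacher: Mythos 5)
Your proposal is correct and follows essentially the same route as the paper: partition $V$ into the final clusters of unclustered nodes, bound each cluster's diameter in $H$ by $3^k-1$ via Lemma~\ref{lem:part-diam}, and obtain the crossing edge for each adjacent pair of clusters by observing that the earlier-unclustered representative is light whp and therefore queries (and records an edge to) every one of its $G_{j^*}$-neighbors, then invoke Lemma~\ref{lem:spanner}. Your explicit tracking of the edge $(a,b)$ up the contraction hierarchy is just a more detailed rendering of the paper's one-line claim that a suitable neighbor $w \in V_{r(u)}$ with $C_{r(u)}(w) \subseteq C(v)$ exists.
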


\begin{proof}
Since any node in $G_k$ is unclustered,
$\mathcal{C} = \{C(v) \mid v \in V'\}$ is a partition of $V$.
If $C(u)$ and $C(v)$ ($u, v \in V'$) are neighboring and
$r(u) \leq r(v)$ holds, there exists a node $w \in V_{r(u)}$ such that
$u$ and $w$ are neighboring in $G_{r(u)}$ and $C_{r(u)}(w) \subseteq
C(v)$ holds. Since every unclustered node is light (by Lemmas~\ref{lem:heavynode}
and \ref{lem:unclustered}), $C(u)$ is light. Then
at least one edge in $E(C_{r(u)}(u), C_{r(u)}(w))$ is added to $S$, which
implies that at least one edge in $E(C(u), C(v))$ is added to $S$.
Consequently, the edge set $S$ constructed by \Sampler{} satisfies the condition of
Lemma~\ref{lem:spanner} w.r.t. $\C$. The remaining issue is to bound
the diameter of $C(v) \in \mathcal{C}$ for all $v \in V'$, which is shown
by Lemma~\ref{lem:part-diam}.
\end{proof}

\subsection{Bounding the Number of Edges}
\label{sec:bound-numb-edges}

Using Lemma~\ref{lem:bound-graph-size}, we can bound the number of edges in $S$
output by \Sampler{}.

\begin{lemma}
The output edge set $S$ contains $\tilde{O}(n^{1 + \delta})$ edges.
\end{lemma}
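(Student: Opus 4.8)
The plan is to decompose the output as $S=\bigcup_{j=0}^{k}F^{(j)}$, where $F^{(j)}=\bigcup_{v\in V_{j}}F_{v}$ is the set of edges contributed by level $j$, and to bound each level separately. Since there are only $k+1=O(\log\log n)$ levels, it suffices to prove $|F^{(j)}|=\tilde{O}(n^{1+\delta})$ for every $j$, as the $O(\log\log n)$ factor is absorbed into the $\tilde{O}(\cdot)$ notation. The argument then splits into two ingredients: (i) a per-node bound on $|F_{v}|$ at level $j$, and (ii) the bound on the number $n_{j}$ of nodes at level $j$ already furnished by Lemma~\ref{lem:bound-graph-size}.

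The key step is the per-node bound: I would show that every $v\in V_{j}$ contributes $|F_{v}|=|\hat{N}_{j}(v)|=\tilde{O}(\exp_{n}(2^{j}\delta))$ edges, using the fact that $F_{v}$ holds exactly one edge per queried neighbor. The first step of \Cluster{j} is designed to halt as soon as $|\hat{N}_{j}(v)|$ reaches the threshold $c\exp_{n}(2^{j}\delta)\log n$. Hence a light node whose degree lies below the threshold trivially satisfies $|F_{v}|=|N_{j}(v)|<c\exp_{n}(2^{j}\delta)\log n$, and a heavy node is cut off at (essentially) the threshold by Lemmas~\ref{lem:heavynode} and~\ref{lem:unclustered}, which guarantee that whp every node is either light or heavy. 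The delicate part, deferred to the obstacle below, is bounding the edges added during the single trial in which the threshold is crossed.

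Granting the per-node bound, the per-level bound follows by multiplying with Lemma~\ref{lem:bound-graph-size}: for $1\le j\le k$ we have $n_{j}\le\tfrac{3}{2}n\hat{p}_{j-1}=\tfrac{3}{2}\exp_{n}(1-(2^{j}-1)\delta)$, whence
\[
|F^{(j)}|\;\le\;n_{j}\cdot\tilde{O}\bigl(\exp_{n}(2^{j}\delta)\bigr)\;=\;\tilde{O}\bigl(\exp_{n}\bigl(1-(2^{j}-1)\delta+2^{j}\delta\bigr)\bigr)\;=\;\tilde{O}\bigl(n^{1+\delta}\bigr),
\]
where the exponent telescopes since $-(2^{j}-1)\delta+2^{j}\delta=\delta$. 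The level $j=0$ must be treated separately, as Lemma~\ref{lem:bound-graph-size} is stated for $j\ge1$; here $n_{0}=n$ and the threshold is $c\exp_{n}(\delta)\log n$, giving $|F^{(0)}|\le n\cdot\tilde{O}(\exp_{n}(\delta))=\tilde{O}(n^{1+\delta})$. Summing over all $k+1$ levels then yields the claim.

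The main obstacle is precisely the per-node bound, and within it the control of the overshoot in the trial that first pushes $|\hat{N}_{j}(v)|$ past the threshold. A single trial samples $c^{2}\exp_{n}(2^{j}\delta+\epsilon)\log^{3}n$ edges, so the crude bound ``threshold plus one trial's worth of new neighbors'' introduces an $\exp_{n}(\epsilon)=n^{1/h}$ factor and only gives $\tilde{O}(n^{1+\delta+\epsilon})$ --- which is exactly the right bound for the \emph{message} complexity but too weak for the \emph{edge} count, which must be free of the $n^{1/h}$ term. Closing this gap requires arguing that the number of \emph{distinct} neighbors newly discovered in that final trial is itself $\tilde{O}(\exp_{n}(2^{j}\delta))$, rather than proportional to the number of sampled edges, so that the extra $n^{1/h}$ factor never reaches the spanner. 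I would isolate this as the crux and establish it from the stopping rule together with the light/heavy dichotomy of Lemma~\ref{lem:unclustered}.
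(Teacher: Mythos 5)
Your skeleton is the paper's: one spanner edge per distinct queried neighbour, a per-node bound on $|F_v|$ multiplied by the bound on $n_j$ from Lemma~\ref{lem:bound-graph-size}, and the telescoping $-(2^j-1)\delta+2^j\delta=\delta$, summed over the $k+1$ levels. The difference is that you have isolated --- and left unproven --- the one step that carries all of the content, namely that the trial in which $|F_v|$ first crosses the threshold $c\exp_n(2^j\delta)\log n$ contributes only $\tilde{O}(\exp_n(2^j\delta))$ \emph{distinct} new neighbours, rather than a number proportional to the $c^2\exp_n(2^j\delta+\epsilon)\log^3 n$ sampled edges. As written, your argument only delivers $\tilde{O}(n^{1+\delta+\epsilon})$, so the proposal is incomplete.

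For comparison: the paper's own proof does not close this gap either; it simply asserts that each trial adds $O(\exp_n(2^j\delta)\log^3 n)$ edges to $S$ per node of $V_j$ and multiplies by $h$ trials and $n_j$ nodes --- i.e., it assumes exactly the claim you flagged. Moreover, the route you sketch (stopping rule plus the light/heavy dichotomy of Lemma~\ref{lem:unclustered}) cannot establish it: take $v\in V_j$ with $\exp_n(2^j\delta+\epsilon)$ distinct neighbours, each joined to $v$ by a single edge. In its very first trial, $v$ draws $c^2\exp_n(2^j\delta+\epsilon)\log^3 n$ uniform samples from $X_v=E_j(v)$ and, by a coupon-collector argument, queries essentially all of its neighbours whp before the stopping condition is ever re-tested; $v$ ends up light with $|F_v|=\Theta(\exp_n(2^j\delta+\epsilon))$. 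So the overshoot you worry about is real, can occur simultaneously at many nodes, and is not removable by the cited lemmas. You were right to single this out as the crux, but closing it requires either settling for the bound $\tilde{O}(n^{1+\delta+\epsilon})$ (which, after reparametrizing $\delta$ and $\epsilon$, still suffices for the applications in Theorem~\ref{thm:main} and Section~\ref{sec:simulation}, since the message complexity already carries the $n^{\epsilon}$ factor) or modifying the sampling/stopping rule --- not a refinement of Lemmas~\ref{lem:heavynode} and~\ref{lem:unclustered}.
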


\begin{proof}
Each trial of the first step in the run of \Cluster{j}
adds $O(\exp_{n}(2^j\delta)\log^3 n)$ edges to $S$ per node
in $V_{j}$, and $n_j = O(\exp_{n}(1 - (2^j - 1)\delta))$ holds by
Lemma~\ref{lem:bound-graph-size}. Then the total number of edges added to $S$
in \Cluster{j} is $h \cdot O(\exp_{n}(2^j\delta)\log^3 n) \cdot
n_j = O(hn^{1 + \delta}\log^3 n)$, and thus the size of $S$ is
$O(khn^{1 + \delta}\log^3 n)$.  Since $k, h \leq \log n$,
we obtain the lemma by omitting all the logarithmic factors.
\end{proof}

\section{Distributed Implementation}
\label{sec:distr-impl}
In this section, we explain how the centralized algorithm presented in
Section~\ref{sec:algorithm} is implemented in a distributed fashion over the
(simple) communication graph
$G = (V, E)$
with round complexity
$O (3^kh)$
and message complexity
$\tilde{O} (n^{1 + 1 / (2^{k+1} - 1) + (1/h)}) =
\tilde{O} (n^{1 + \delta + \epsilon})$
whp.

The key observation in this regard is that procedures \Cluster{j}
would have been naturally distributed if the nodes in
$V_{0}, V_{1}, \dots, V_{k}$
could have performed local computations and exchanged messages over their
incident edges in
$G_{0}, G_{1}, \dots, G_{k}$,
respectively (recall that graphs
$G_{1}, \dots, G_{k}$
are virtual, defined only for the sake of the algorithm's presentation).
Indeed, the action of marking a node as a center and the action of marking an
edge as a query/probe edge are completely local and do not require any
communication (in $G_{j}$).
The action of checking whether a query edge $e$ leads to a center and the
action of identifying all the edges parallel to a query/probe edge $e$ are
easily implemented under the LOCAL model with unique edge IDs by sending a
constant number of messages over edge $e$ in $G_{j}$.

So it remains to explain how local actions in graph $G_{j}$,
$1 \leq j \leq k$,
are simulated in the actual communication graph $G$.
Let $T_j(v)$ be the spanning tree of $\Ind_G(C_j(v))$ shown in the proof of
Lemma~\ref{lem:part-diam}. Once a cluster
$C_j(v)$ is formed, 
no further
edge is added to the inside of the cluster. Thus $T_j(v)$ is already contained
in $C_j(v)$ at the beginning of the run of \Cluster{j}. In the distributed
implementation, the local actions of node $v$ in $G_{j}$ are simulated by nodes $C_j(v)$ in $G$
via a constant number of broadcast-convergecast sessions over $T_j(v)$ rooted
at $v \in V$. (This is made possible by the choice of the LOCAL model
with unique edge IDs). This process requires sending
$O (1)$ (additional) messages over each edge in $T_j(v)$, and by
Lemma~\ref{lem:part-diam}, it takes at most $O(3^j)$ rounds.

\sloppy
\begin{theorem}
Algorithm \Sampler{} has round complexity
$O (3^k h)$
and message complexity
$\tilde{O} (n^{1 + \delta + \epsilon})$
whp.
\end{theorem}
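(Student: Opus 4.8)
The plan is to charge every round and every message to one of the $k+1$ invocations of \Cluster{j}, exploiting the fact that a single (virtual) communication round of $G_j$ is simulated in the real graph $G$ by a broadcast-convergecast over the trees $T_j(\cdot)$, costing $O(3^j)$ rounds but only $O(n)$ extra messages. The round bound then follows from a geometric sum over $j$, whereas the message bound follows once I verify that the dominant contribution comes from the query edges sampled in the first step of \Cluster{j} and that the simulation overhead is a lower-order additive term.

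For the round complexity I would first observe that each \Cluster{j} performs $2h$ trials together with a single clustering step, and that each trial and the clustering step consist of $O(1)$ communication rounds in $G_j$ (sampling and marking are local; checking a query edge and peeling its parallel copies each take $O(1)$ rounds in $G_j$). By Lemma~\ref{lem:part-diam} every tree $T_j(v)$ has diameter at most $3^j-1$, so simulating one round of $G_j$ over $G$ takes $O(3^j)$ rounds. Hence iteration $j$ runs in $O(3^j h)$ rounds, and $\sum_{j=0}^{k} O(3^j h) = O(3^k h)$ gives the claimed (deterministic) round bound.

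For the message complexity I would partition the messages into three kinds. (i)~\emph{Query messages:} in each of the $2h$ trials, every $v \in V_j$ samples $c^2\exp_n(2^j\delta+\epsilon)\log^3 n$ query edges, each incurring $O(1)$ messages over the corresponding real edge; using $n_j = O(\exp_n(1-(2^j-1)\delta))$ from Lemma~\ref{lem:bound-graph-size}, the per-iteration count is $n_j \cdot 2h \cdot \tilde{O}(\exp_n(2^j\delta+\epsilon)) = \tilde{O}(n^{1+\delta+\epsilon})$, and as $k \leq \log\log n$ the sum over iterations remains $\tilde{O}(n^{1+\delta+\epsilon})$. (ii)~\emph{Center-checking messages:} in the clustering step each node probes one edge per queried neighbor, so the total is $\sum_{v}|F_v| = |S| = \tilde{O}(n^{1+\delta})$ by the edge-count lemma, which is dominated by~(i). (iii)~\emph{Control messages:} each simulated round sends $O(1)$ messages per tree edge, and since the clusters at level $j$ are disjoint their trees have at most $n$ edges in total; over the $O(h)$ simulated rounds of iteration $j$ and all $k+1$ iterations this is $\tilde{O}(n)$, again dominated. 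Summing the three bounds yields message complexity $\tilde{O}(n^{1+\delta+\epsilon})$, with the \emph{whp} qualifier inherited from Lemma~\ref{lem:bound-graph-size}.

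The step I expect to be the main obstacle is the clean accounting in~(iii): one must ensure that the potentially large per-cluster reports---most notably a queried node reporting the identities of all edges incident to it, which is what enables the parallel-edge peeling---are gathered by $O(1)$ convergecast sessions per trial rather than once per individual query. Only then does the simulation cost enter message complexity as an additive $\tilde{O}(n)$ term instead of inflating the $\tilde{O}(n^{1+\delta+\epsilon})$ query cost by a multiplicative $3^j$ factor. Confirming this batching, and noting that the unbounded message size of the LOCAL model means these large reports still count as $O(1)$ messages per edge traversed, is where the argument demands the most care.
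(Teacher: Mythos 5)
Your proposal is correct and follows essentially the same route as the paper's proof: a geometric sum over $j$ of $O(3^j h)$ rounds for the round bound, and for messages the dominant term $n_j \cdot h \cdot \tilde{O}(\exp_n(2^j\delta+\epsilon)) = \tilde{O}(hn^{1+\delta+\epsilon})$ per level via Lemma~\ref{lem:bound-graph-size}, with the broadcast-convergecast simulation overhead entering as an additive $O(n)$ per simulated round. Your explicit three-way partition of the messages and the closing remark about batching the edge-ID reports into $O(1)$ convergecast sessions per trial (relying on unbounded LOCAL message size) make precise what the paper states more tersely, but they do not constitute a different argument.
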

\par\fussy
\begin{proof}
In the first step of \Cluster{j}, each trial takes $O(1)$
rounds in $G_{j}$. The second step takes $O(1)$ rounds in $G_{j}$.
Hence the total running time of \Cluster{j} takes $O(h(3^j - 1))$ rounds in $G$.
Summing up it over all $0 \leq j \leq k$, the bound on the round complexity is
$\sum_{j = 0}^{k} O(h(3^j - 1)) = O (3^kh)$.

For the message complexity, the simulation of one round in $G_j$ is
implemented with an additive overhead incurred by
a constant-number sessions of broadcast and convergecast in each cluster
$C(v)$ for $v \in V_j$, which use $O(n)$ messages in total.
Each trial of the first step in \Cluster{j} uses
$O(\exp_{n}(2^j\delta+ \epsilon)\log^3 n)$
messages per node in $V_{j}$. Thus the total message complexity in
\Cluster{j} is $O(\exp_{n}(2^j\delta + \epsilon)\log^3 n) \cdot
n_j = O(hn^{1 + \delta + \epsilon}\log^3 n)$ by
Lemma~\ref{lem:bound-graph-size}. Summing up this over $0 \leq j \leq k$,
we can conclude that the message complexity is
$O(khn^{1 + \delta + \epsilon}\log^3 n ) = \tilde{O}(n^{1 + \delta + \epsilon})$
(recall $k, h \leq \log n$).
\end{proof}

\section{Message-Efficient Simulation of Local Algorithms}
\label{sec:simulation}

In this section, we provide a new and versatile
message-reduction scheme for LOCAL algorithms based on the new
\Sampler{} algorithm.
The technical ingredients of this scheme consist of a message-efficient $t$-local
broadcast
algorithm built on top of constructed spanners, which is commonly used
in the past message-reduction schemes~\cite{Censor-HillelHKM2012,Haeupler2015}.

Consider the initial configuration where each node $v \in V$ has a message $M_v$,
and let $B_{G,t}(v) = \{u \mid \Dist_G(v, u) \leq t\}$.
The task of the $t$-local broadcast is that each $v \in V$ delivers
 $M_v$ to all the
nodes $u \in B_{G,t}(v)$. In any $t$-round LOCAL algorithm, the computation at
node $v \in V$ relies only on the initial knowledge (i.e., its ID, initial state,
and incident edge set) of the nodes in $B_{G,t}(v)$, and thus any $t$-local
broadcast algorithm in the LOCAL model can simulate any $t$-round LOCAL algorithm.
The core of the scheme is the following theorem.

\begin{lemma} \label{lem:simulation}
There exist two $t$-local broadcast algorithms respectively achieving
the following time and message complexities:
\begin{itemize}
  \item $\tilde{O}(tn^{1 + 2/(2^{\gamma + 1} - 1)})$ message complexity and
  $O(3^\gamma t + 6^\gamma)$ round complexity for any
  $1 \leq \gamma \leq \log\log n$ and $t \geq 1$,
  \item $\tilde{O}(t^2n^{1 + 1/O(t^{1/\log\log t})\log t)})$ message
  complexity and $O(t)$ round complexity for $t \geq 1$.

\end{itemize}
\end{lemma}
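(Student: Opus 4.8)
The plan is to reduce the $t$-local broadcast task to flooding over a sparse spanner, exploiting that the LOCAL model imposes no bound on message size. First I would record the following primitive: if $H=(V,S)$ is an $\alpha$-spanner of $G$ that \Sampler{} (or any construction) produces in $R$ rounds while sending $M$ messages, then $t$-local broadcast can be completed in $O(R+\alpha t)$ rounds and $O(M+\alpha t\,|S|)$ messages. The justification is that $H$ is a spanning subgraph, so $\Dist_G(u,v)\le\Dist_H(u,v)\le\alpha\,\Dist_G(u,v)$ and hence $B_{G,t}(v)\subseteq B_{H,\alpha t}(v)$; running $\alpha t$ rounds of flooding over $H$, where in each round every node forwards along each incident $S$-edge a single message carrying all the $(u,M_u)$ pairs it has learned so far, delivers $M_v$ to every node of $B_{G,t}(v)$. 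Since each of the $\alpha t$ rounds sends at most $2|S|$ messages and message size is irrelevant, the bound $O(\alpha t\,|S|)$ follows on top of the $R$ rounds and $M$ messages spent building $H$.

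For the first item I would instantiate this primitive with a single invocation of \Sampler{}. Writing $\delta=1/(2^{\gamma+1}-1)$, I run \Sampler{} with $k=\gamma$ and $h=2^{\gamma+1}-1$, so that the construction-overhead exponent $1/h$ equals $\delta$. By Theorem~\ref{thm:main} this yields an $O(3^\gamma)$-spanner with $|S|=\tilde O(n^{1+\delta})$ edges, built in $R=O(3^\gamma h)=O(6^\gamma)$ rounds and $M=\tilde O(n^{1+\delta+1/h})=\tilde O(n^{1+2\delta})$ messages. Plugging $\alpha=O(3^\gamma)$, $|S|=\tilde O(n^{1+\delta})$, $R=O(6^\gamma)$, and $M=\tilde O(n^{1+2\delta})$ into the primitive gives round complexity $O(6^\gamma+3^\gamma t)=O(3^\gamma t+6^\gamma)$ and message complexity $\tilde O(n^{1+2\delta}+t\,n^{1+\delta})=\tilde O(t\,n^{1+2/(2^{\gamma+1}-1)})$, i.e.\ exactly the first trade-off. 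The only thing to check is that $3^\gamma$ and the chosen $h=O(2^\gamma)$ are polylogarithmic (they are, since $\gamma\le\log\log n$), so they are absorbed into $\tilde O(\cdot)$.

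The second item is where the real work lies, because a strict $O(t)$ round bound forbids any $\omega(1)$ multiplicative blow-up in the final simulation, yet we still want near-linear sparsity $n^{1+O(1/\log t)}$ --- a regime no constant-stretch spanner can meet. I would therefore follow the two-stage route sketched in Section~\ref{sec:results}: use a coarse \Sampler{} spanner $H$ (cheap to build, polylogarithmic stretch) only to simulate, via the primitive above, an off-the-shelf LOCAL construction of a much better spanner $H'$ of $G$, and then flood over $H'$. To keep the last stage within $O(t)$ rounds, $H'$ must have essentially $(1+\varepsilon)$-type multiplicative stretch, so I would take $H'$ to be a near-additive $(1+\varepsilon,\beta)$-spanner (e.g.\ via \cite{ElkinP2004,ElkinN2017}) with $n^{1+O(1/\log t)}$ edges; then $B_{G,t}(v)$ sits within $H'$-distance $(1+\varepsilon)t+\beta=O(t)$ for $t\ge\beta$, so the final flooding costs $O(t)$ rounds and $\tilde O(t\,n^{1+O(1/\log t)})$ messages.

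The main obstacle is the parameter balancing in this second item. One must choose the coarse spanner's parameters $(k,h)$ and the target spanner's stretch/sparsity so that three budgets hold simultaneously: (i) building $H$ costs at most the target message budget; (ii) simulating the $\tau$-round off-the-shelf construction over $H$ --- which by the primitive costs $O(\alpha_H\,\tau)$ rounds --- stays within $O(t)$ despite $\alpha_H=\omega(1)$, forcing $\tau$ and $\alpha_H$ to be traded against each other; and (iii) $H'$ is sparse enough while its additive term $\beta$ and the construction radius remain $O(t)$, which also requires separately handling the full range $t\ge 1$ (in particular small $t$, where $\beta$ may exceed $t$). Tracking these constraints is exactly what produces the delicate exponent $1/(t^{1/\log\log t}\log t)$ stated in the lemma (which the theorem then rounds to $O(1/\log t)$) together with the extra factor of $t$ in the message bound. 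I expect verifying that the composed stretch leaves the round complexity at $O(t)$, and that the inner construction's simulation does not itself exceed $O(t)$ rounds, to be the crux; once the parameters are fixed, the message accounting is a routine substitution into the primitive.
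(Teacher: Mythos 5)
Your first item is correct and is exactly the paper's argument: run \Sampler{} with $k=\gamma$ and $h=2^{\gamma+1}-1$, then flood to distance $O(3^\gamma t)$ over the resulting spanner; your accounting of the $O(6^\gamma)$ additive round term and the $\tilde O(n^{1+2\delta}+t\,n^{1+\delta})$ message bound matches the paper. Your flooding primitive ($O(R+\alpha t)$ rounds, $O(M+\alpha t\,|S|)$ messages) is also the one the paper uses implicitly in both items.

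For the second item, however, you have only reproduced the skeleton (coarse \Sampler{} spanner used to simulate an off-the-shelf construction of a sparser bounded-stretch spanner $H'$, then flood over $H'$) and you explicitly defer the parameter balancing, which is the entire content of that item. The paper carries it out concretely: it takes $H'$ to be the $(3,O(3^{k}))$-spanner of Derbel et al.~\cite{DerbelGPV2009}, which has $\tilde O(3^{k}n^{1+1/O(k)})$ edges and is built in $O(3^{k})$ rounds; setting $k=\lceil \log_3 t-\log\log_3 t\rceil$ makes the inner construction run in $\tau=O(t/\log_3 t)$ rounds and produce a $(3,O(t))$-spanner with $\tilde O(t\,n^{1+1/O(\log t)})$ edges, and setting the outer parameter $\gamma=\log_3\log_3 t$ makes the coarse stretch $3^{\gamma}=\log_3 t$, so the simulation costs $O(3^{\gamma}\tau+6^{\gamma})=O(t)$ rounds; the final flood to distance $3t+O(t)$ then gives $O(t)$ rounds and $\tilde O(t^2 n^{1+1/O(\log t)})$ messages. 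Without exhibiting such a choice, the claim that all three of your budgets can be met simultaneously is unverified, so the second item is not proved. Two smaller points: your assertion that $H'$ ``must have essentially $(1+\varepsilon)$-type multiplicative stretch'' is stronger than needed --- any constant multiplicative stretch suffices, since $O(t)$ absorbs the constant, which is why the paper can use stretch $3$; and your suggested substitutes \cite{ElkinP2004,ElkinN2017} are plausible (the paper's concluding remarks themselves flag Elkin--Neiman as a likely improvement), but then the $\beta$-versus-$t$ issue for small $t$ that you raise would indeed need to be handled, whereas the paper's choice sidesteps it because $\beta=O(3^k)=O(t)$ by construction.
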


\begin{proof}
Consider the realization of the first algorithm. For any $v$, all the nodes
in $B_{G, t}(v)$ are within $\alpha t$-hop away
from $v$ in any $\alpha$-spanner. Thus, once we got any $\alpha$-spanner
$H = (V, S)$, the local flooding within distance $\alpha t$ in $H$
trivially implements $t$-local broadcast. Setting
$k = \gamma$ and $h = (2^{\gamma + 1} -1)$ of Theorem~\ref{thm:main}
implements the spanner satisfying the first condition, where the additive
$O(6^\gamma)$ term is the time for spanner construction.
For the second algorithm, we utilize the spanner-construction algorithm
by Derbel et al.\cite{DerbelGPV2009} which provides
a $(3, O(3^k))$-spanner $H$ with
$\tilde{O}(3^kn^{1 + 1/O(k)})$ edges within $O(3^k)$ rounds for any $k \geq 1$.
Consider the algorithm by Derbel et al. with parameter
$k = \lceil \log_3 t - \log\log_3 t \rceil$, which results in the
$O(t/\log_3 t)$-round algorithm of constructing $(3, O(t))$-spanner with
$\tilde{O}(tn^{1 + 1/O(\log t)})$ edges. We run this algorithm on top
of the first simulation scheme with parameter $\gamma = \log_3 \log_3 t$.
The simulated algorithm constructs a
$(3, O(t))$-spanner $H'$ with $\tilde{O}(tn^{1 + 1/O(\log t)})$ edges
spending $O(3^{\log_3 \log_3 t} \cdot t/\log_3 t + 6^{\log_3 \log_3 t}) = O(t)$ rounds and
$\tilde{O}(tn^{1 + 1/O(\log t)})$ messages.
The local flooding within distance $3t + O(t)$ on top of $H'$ implements the
$t$-local broadcast, which takes $O(t)$ rounds and
$\tilde{O}(t^2n^{1 + 1/ O(\log t)})$ messages. The lemma is proved.
\end{proof}
As stated above, Theorem~\ref{thm:simulation} is trivially deduced from
Lemma~\ref{lem:simulation}.

\section{Concluding Remarks}
\label{sec:concludingremarks}

In this paper, we present an efficient spanner construction as well as two message-reduction schemes for LOCAL algorithms
that preserve the asymptotic time
complexity of the original algorithm. The reduced message complexity is close
to linear (in $n$). Is this the best possible in constructing a spanner?
Similarly, some open questions still lie on the line of
developing efficient message-reduction schemes: (1) While
our scheme only sends $\tilde{O}(t^2n^{1 + O(1/\log t)})$ messages for simulating
$t$-round algorithms, it is not clear whether the additive $O(1/\log t)$ term in
the exponent can be improved further. Can one have a message-reduction
scheme with $\tilde{O}(\mathrm{poly}(t)n^{1 + o(1/\log t)})$ message complexity and
no overhead in the round complexity? (2)  Algorithm \Sampler{} inherently relies on randomized
techniques for probing the neighbors in $G_j$ using only few messages.
Is it possible to obtain a deterministic message-reduction scheme with no
degradation of time?

Very recently, the authors received a comment on the first question,
which states that utilizing the spanner construction by Elkin and 
Neiman~\cite{ElkinN2017} will improve the message complexity. 
Unfortunately, due to lack of time, we do not completely check this 
idea, and thus the current version only states the result based on the 
algorithm by Derbel et al., but certainly it is a promising approach.
If it actually works, the message complexity will be reduced to 
$\tilde{O}(t^2n^{1 + O(1/t^{1/\log\log t})})$. 

Finally, we note that using an $o(m)$ messages spanner construction algorithm that does not increase the time can be useful also for global algorithms in the LOCAL model. It implies that any function can now be computed on the graph in strictly optimal $O(\mathrm{diameter})$ time 
and $o(m)$ messages (for large enough $m$).

\end{document}